\newtheorem{prop}{Proposition}[section]
\newtheorem{theorem}[prop]{Theorem}
\newtheorem{lemma}[prop]{Lemma}
\newenvironment{proof}[1][Proof]{\noindent\textbf{#1.} }{\ \rule{0.5em}{0.5em}}
\begin{document}

\title{Model Based Bootstrap Methods for Interval Censored Data}
\author{Bodhisattva Sen and Gongjun Xu\\
Columbia University and University of Minnesota}
\maketitle

\begin{abstract}
We investigate the performance of model based bootstrap methods for constructing point-wise confidence intervals around the survival function with interval censored data. We show that bootstrapping from the nonparametric maximum likelihood estimator of the survival function is inconsistent for both the current status and case 2 interval censoring models. A model based smoothed bootstrap procedure is proposed and shown to be consistent. In addition, simulation studies are conducted to illustrate the (in)-consistency of the bootstrap methods. Our conclusions in the interval censoring model would extend  more generally to estimators in regression models that exhibit non-standard  rates of convergence. 
\end{abstract}

\section{Introduction}\label{Intro}
In recent years there has been considerable research on the analysis of interval censored data. Such data arise extensively in epidemiological studies and clinical trials, especially in large-scale panel studies where the event of interest, which is typically an infection with a disease or some other failure (like organ failure), is not observed exactly but is only known to happen between two consecutive examination times. In particular, large-scale HIV/AIDS studies typically yield various types of interval censored data where interest centers on the distribution of time to HIV infection, but the exact time of infection is only known to lie between two consecutive followups at the clinic. 

For general interval censored data, often called {\it mixed case} interval censoring, an individual is checked at several time points and the status of the individual is ascertained: 1 if the infection/failure has occurred by the time he/she is checked and 0 otherwise.  Let $X$ be the unobserved time of onset of some disease, having distribution function  $F$, and let $T_1\leq T_2\leq \cdots\leq T_K$ be the $K$ observation times.  Here $X$ and $(T_1,\ldots, T_K)$ are assumed to be independent and $K$ can be random. We observe $(T_1,\ldots, T_K,\Delta_1,\ldots, \Delta_K)$ where $\Delta_k = \mathbf{1}_{T_{k-1}< X \le T_k},  k=1,\ldots,K,$ with $T_0=0$ and $\mathbf{1}$ denoting the indicator function.
Our data consist of $n$ independent and identically distributed copies of $(T_1,\ldots, T_K,\Delta_1,\ldots, \Delta_K)$.  We are interested in making inference about the value of $F$ at a pre-specified location $t_0 >0$, assumed to be in the interior of the support~of~$F$.

When the observation number $K \equiv 1$, we say that    we have {\it case 1} interval censoring or {\it current status data}.  In this case, 
our observations are $(T_i,\Delta_i)$ with $\Delta_i = \mathbf{1}_{X_i \le T_i}$, $i=1,\ldots,n$.
The nonparametric maximum likelihood estimator (NPMLE) $\tilde F_n$ of $F$ maximizes the 
log-likelihood function
\begin{equation}\label{npmle}
 \mathbb{F} \mapsto {\sum}_{i=1}^n \{\Delta_i \log \mathbb{F}(T_i) + (1 - \Delta_i) \log (1 - \mathbb{F}(T_i))\}
 \end{equation}
over all distribution functions $\mathbb{F}$ and it can be characterized as the left derivative of the greatest convex minorant of the cumulative sum diagram of the data; see page 41 of~\cite{GW92}.  Let $G$ be the distribution function of $T$ and assume that $F$ and $G$ are continuously differentiable at $t_0$ with derivatives $f(t_0) > 0$ and $g(t_0) > 0$, respectively. Under these assumptions, it is well known that
\begin{equation}\label{eq:limF_n1}
\gamma_n:=  n^{1/3} \{\tilde F_n(t_0) - F(t_0)\} \stackrel{}{\rightarrow} \kappa \mathbb{C},
\end{equation}
in distribution, 
where $\kappa = [4 F(t_0)\{1-F(t_0)\}f(t_0)/g(t_0)]^{1/3}$, $\mathbb{C} = \arg \min_{h \in \mathbb{R}} \{\mathbb{Z}(h) + h^2\}$, and $\mathbb{Z}$ is a standard two-sided Brownian motion process, originating from $0$. 

In the general mixed case interval censoring model, the limiting distribution of the NPMLE is unknown. In fact, in the literature only very limited theoretical results are available on the NPMLE.  \cite{GW92} discussed the asymptotics of the behavior of the NPMLE in a particular version of the case 2 censoring model ($K=2$); \cite{wellner1995interval} studied the consistency of the NPMLE where each subject gets exactly $k$ examination times;  \cite{van2000preservation} proved the consistency of the NPMLE of the mixed case interval censoring in the Hellinger distance; see also  \cite{schick2000consistency} and \cite{song2004estimation}.

We are interested in constructing a pointwise confidence interval  for $F$ at $t_0$ in the general mixed case  censoring model. 
In the literature, very few results exist that address the construction of pointwise confidence intervals \citep{song2004estimation,sen2007pseudolikelihood}. Even in the current status model, where we know the limiting distribution of the NPMLE, to construct a confidence interval for $F(t_0)$ we need to estimate the nuisance parameter $\kappa$, which is indeed quite difficult -- it involves estimation of the derivative of $F$ and that of the distribution of $T$. For the current status model, there exist a few methods that can be used for constructing confidence intervals for $F(t_0)$: The $m$-out-of-$n$ bootstrap method and subsampling are known to be consistent in this setting \citep*{PRW99,LP06}. However, both methods require the choice of a {\it block} size. In practice, the choice of this tuning parameter is quite tricky and the confidence intervals vary drastically with different choices of the block size. The estimation of the nuisance parameter $\kappa$ can be avoided by using the likelihood-ratio test of \cite{BW01}. Recently, \cite*{groeneboom2010maximum} proposed estimates of $F(t_0)$ based on smoothed likelihood function and smoothed NPMLE. However, the limiting distributions depend on the derivative of the density function.

In this paper we consider bootstrap methods for constructing confidence intervals for $F(t_0)$ and investigate the (in)-consistency and performance of two model-based bootstrap procedures that are based on the NPMLE of $F$, in the general framework of mixed case interval censoring. Bootstrap intervals avoid the problem of estimating nuisance parameters and are generally reliable in problems with $\surd{n}$ convergence rates. See \cite{BF81}, \cite{S81}, and \cite{ST95} and  references therein. 

In regression models, there are two main bootstrapping strategies: ``bootstrapping pairs'' and ``bootstrapping residuals''  \citep[see e.g., page 113 of][]{ET93}. \cite{AH05} considered ``bootstrapping pairs'', i.e., bootstrapping from the empirical distribution function of the data, and showed that the procedure is {\it inconsistent} for the current status model and also other cube-root convergent estimators. In ``bootstrapping residuals'' one fixes (conditions on) the predictor values and generates the response according to the estimated regression model using bootstrapped residuals. In a binary regression problem, as in the current status model, this corresponds to generating the responses as independent Bernoulli random variables with success probability obtained from the fitted regression model. 

In this paper we focus on the ``bootstrapping residuals'' procedure. In particular, for the mixed-case interval censoring model, conditional on an individual's observation times $T_1,\ldots, T_K$, we generate bootstrap sample $(\Delta^*_1,\ldots,\Delta^*_K, 1-\sum_{k=1}^K\Delta^*_k)$ following a multinomial distribution with $n=1$ and $p_k= \hat F_n(T_k)-\hat F_n(T_{k-1})$, $k=1,\ldots,K+1,$ i.e., $$ \vspace{-0.02in}( \Delta^*_1,\ldots,\Delta^*_K, 1-{\sum}_{k=1}^{K}\Delta^*_k ) \sim \mbox{Multinomial}(1,\{\hat F_n(T_k)-\hat F_n(T_{k-1})\}_{i=1}^{K+1}),$$ where $\hat F_n$ is an estimator of $F$ and $\hat F_n(T_0)=0,\hat F_n(T_{K+1})=1$. We call this a model-based bootstrap scheme, as it uses the inherent features of the model. We study the behavior of the bootstrap method when $\hat F_n = \tilde F_n$, the NPMLE of $F$, and $\hat F_n$ is a smooth estimator of $F$. Specifically, in Section~\ref{SuffCond} we state a general bootstrap convergence result for the current status model which provides sufficient conditions for any bootstrap scheme to be consistent. In Section~\ref{InconsB} we illustrate, both theoretically and through simulation, the inconsistency of the NPMLE bootstrap method. 
The failure of the NPMLE bootstrap is mostly due to the non-differentiability  of $\tilde F_n$. 
On the other hand, the smoothed NPMLE is differentiable and successfully mimics the local behavior of the true distribution function $F$  at the location of interest, i.e., $t_0$. As a result, the method yields asymptotically valid confidence intervals; see Section~\ref{ConsSB} where we prove the consistency of the smoothed bootstrap procedure, again in the current status model. The smoothed bootstrap procedure requires the choice of a smoothing bandwidth and we discuss this problem of bandwidth selection in Section~\ref{TunPara}.

Next, in Section~\ref{case2}, we study  the case 2 interval censoring model, i.e., when $K \equiv 2$. Even in this case, the distribution of the NPMLE is not completely known, although conjectures and partial results exist. \cite{groeneboom1991nonparametric} studied a one-step estimate $F_n^{(1)}$, obtained at the first step of the iterative convex minorant algorithm  \citep[see][]{GW92} and conjectured that  $F^{(1)}_n$ is asymptotically equivalent to the NPMLE. This conjecture is called {\it the working hypothesis} in this paper and is still unproved. We assume that this conjecture holds and focus on bootstrapping the distribution of the one-step estimator. We show the inconsistency of bootstrapping from the NPMLE and  the consistency of the smoothed bootstrap method.

In general mixed case interval censoring, \cite{sen2007pseudolikelihood}  introduced a pseudolikelihood method for estimating $F(t_0)$. However, the pseudolikelihood does not use the full information in the data and may not be as efficient as the NPMLE \citep{song2004estimation}. This is illustrated by a simulation study in Section~\ref{MixedCase}. We compare the finite sample performance of different bootstrap methods under the general setup of mixed interval censoring model. These comparisons illustrate the superior performance of the smoothed bootstrap procedure.

Our results also shed light on the behavior of bootstrap methods in similar non-standard convergence problems, such as the monotone regression estimator \citep{B70},  Rousseeuw's least median of squares estimator  \citep{R84}, and the estimator of the shorth \citep{A72,SW86}; see also \cite{GW01} for statistical problems in which the distribution ${\mathbb C}$  arises.

\section{Current status model}\label{CSModel}

\subsection{A sufficient condition for the consistency of the bootstrap}\label{SuffCond}

Under the current status model,  our data are $\mathbf{Z}_n = \{(T_i,\Delta_i)\}_{i=1}^n$, where $\Delta_i = \mathbf{1}_{X_i \le T_i}$.  Each $X_i$ can be interpreted as the unobserved time of onset of a disease and $T_i$ is the check-up time at which the $i$th patient is observed. 
We assume that $X_i \sim F$ and $T_i \sim G$ are independent and that $F$ and $G$ are continuously differentiable at $t_0 > 0$ ($t_0$ being a point in the interior of the support of $F$) with derivatives $f(t_0) > 0$ and $g(t_0) > 0$.

We want to approximate the distribution function $H_n$ of  $$\gamma_n=n^{1/3} \{\tilde F_n(t_0) - F(t_0)\}$$ by using bootstrap methods. In our model based bootstrap approach we choose an estimator, say $F_n$, of $F$ (which could be NPMLE $\tilde F_n$ or a smoothed version of it) and generate the bootstrapped response values as $\Delta_i^* \sim \mbox{Bernoulli}( F_n(T_i))$, fixing the values of $T_i$. This is the analogue of ``bootstrapping residuals'' in our setup. Let $\tilde F^*_{n}$ be the NPMLE of the bootstrap sample.

In the following we establish conditions on $F_n$ such that the bootstrap procedure is consistent, i.e., $$\gamma^*_n = n^{1/3} \{\tilde F_n^*(t_0) -  F_n(t_0)\}$$ converges weakly to $\kappa\mathbb{C}$, as defined in (\ref{eq:limF_n1}), given the data.

We first start by formalizing the notion of {\it consistency} of the bootstrap. Let $H_n^*$ be the conditional distribution function of $\gamma_n^*$, the bootstrap counterpart of $\gamma_n$, given the data. Let $d$ denote the Levy metric or any other metric metrizing weak convergence of distribution functions. We say that $H_{n}^*$ is {\it weakly consistent} if $d(H_n,H_n^*)\stackrel{}{\rightarrow} 0$ in probability. If the convergence holds with probability 1, then we say that the bootstrap is strongly consistent. If $H_{n}$ has a weak limit $H$, then consistency requires $H_{n}^*$ to converge weakly to $H$, in probability; and if $H$ is continuous, consistency requires 
$\sup_{x \in \mathbb{R}} |H_{n}^*(x) - H(x)| \stackrel{}{\rightarrow} 0 \mbox{ in probability as } n \rightarrow \infty.$ 

 Let $ F_n$ be a sequence of distribution functions that converge weakly to $F$ and suppose that
 \begin{equation}\label{eq:F_n-F}
\lim_{n\rightarrow\infty}\| F_n-F\|=0,
\end{equation} 
 almost surely, where for any bounded function $h:I \rightarrow \mathbb{R}$, $\|h\| = \sup_{x \in I} |h(x)|$. 
As shown in \cite{GW92},  the NPMLE obtained from the bootstrap sample $\tilde F^*_n$, defined as the maximizer of \eqref{npmle} over all distribution functions is a step function with possible jumps only at the predictor values $T_1, \ldots, T_n.$ We have the following result on the consistency of bootstrap methods in the current status model.   
\begin{theorem}\label{thm2}
If~\eqref{eq:F_n-F} and
\begin{equation}\label{condf}
\lim_{n\rightarrow\infty} n^{1/3}|F_n(t_0+n^{-1/3}t)-F_n(t_0) - f(t_0)n^{-1/3}t|=0
\end{equation} 
hold almost surely, then  conditional on the data, the bootstrap estimator $\gamma_n^*$ converges in distribution to $\kappa\mathbb{C}$, as defined in~\eqref{eq:limF_n1},  almost surely.
\end{theorem}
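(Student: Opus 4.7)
The plan is to imitate the classical Groeneboom--Wellner proof of \eqref{eq:limF_n1} in the bootstrap world, working conditionally on the data $\mathbf{Z}_n$. The three ingredients I would line up are: (i) the switch relation that expresses a value of the isotonic NPMLE as the location of the argmax of a cumulative-sum process; (ii) localization of that argmax at spatial rate $n^{-1/3}$ around $t_0$; and (iii) an argmax continuous mapping theorem. Writing $W_n(t) = \sum_i \mathbf{1}_{T_i \le t}$ and $V_n^*(t) = \sum_i \Delta_i^* \mathbf{1}_{T_i \le t}$, since $\tilde F_n^*$ is the left slope of the greatest convex minorant of $\{(W_n(T_{(i)}), V_n^*(T_{(i)}))\}$, the switch relation reads $\tilde F_n^*(t_0) \ge a \iff U_n^*(a) \le t_0$ for $U_n^*(a) = \arg\max_t \{V_n^*(t) - a W_n(t)\}$. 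With the choice $a = F_n(t_0) + n^{-1/3} x$ and the reparametrization $h = n^{1/3}(t - t_0)$, the event $\{\gamma_n^* \ge x\}$ becomes $\{\hat h_n^*(x) \le 0\}$ for the argmax $\hat h_n^*(x)$ of the localized process
\[ M_n^*(h;x) = n^{2/3} \sum_{i=1}^n \{\Delta_i^* - F_n(t_0) - n^{-1/3} x\}\{\mathbf{1}_{T_i \le t_0 + n^{-1/3} h} - \mathbf{1}_{T_i \le t_0}\}. \]

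Next I would decompose $M_n^* = S_n^* + D_n^*$, with $S_n^*(h)$ built from the centered bootstrap innovations $\Delta_i^* - F_n(T_i)$ and $D_n^*(h;x)$ collecting the deterministic (given $\mathbf{Z}_n$) drift $F_n(T_i) - F_n(t_0) - n^{-1/3} x$. Conditional on the data, the $\Delta_i^* - F_n(T_i)$ are independent, bounded, mean zero, with conditional variances $F_n(T_i)\{1 - F_n(T_i)\}$. Using \eqref{eq:F_n-F} and Glivenko--Cantelli for the empirical distribution of $T_i$, the conditional variance of $S_n^*(h)$ converges almost surely to $F(t_0)\{1-F(t_0)\} g(t_0)|h|$; a conditional Lindeberg CLT then gives finite-dimensional convergence of $S_n^*$ to $\sigma \mathbb{Z}(h)$ with $\sigma^2 = F(t_0)\{1-F(t_0)\} g(t_0)$, and a conditional maximal inequality for the indicator class $\{\mathbf{1}_{T_i \le t_0 + n^{-1/3} h}\}_h$ yields tightness. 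For the drift, condition \eqref{condf} linearizes $F_n$ near $t_0$, so a Taylor expansion combined with Glivenko--Cantelli on a shrinking neighborhood of $t_0$ gives almost-sure locally uniform convergence of $D_n^*(h;x)$ to $\tfrac{1}{2} f(t_0) g(t_0) h^2 - x g(t_0) h$. Collecting terms, $M_n^*(\cdot;x)$ converges conditionally in distribution, almost surely in the data, to $\sigma \mathbb{Z}(h) + \tfrac{1}{2} f(t_0) g(t_0) h^2 - x g(t_0) h$.

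Finally, I would invoke the argmax continuous mapping theorem to conclude that $\hat h_n^*(x)$ converges conditionally in distribution to the argmax of this limit, which via Brownian rescaling is a linear-in-$x$ shift of $\kappa \mathbb{C}$; inverting the switch relation then recovers the conditional convergence of $\gamma_n^*$ to $\kappa \mathbb{C}$ at each continuity point. The main obstacle is conditional tightness of $\hat h_n^*(x)$: one must show that this argmax is bounded in conditional probability almost surely in the data, the bootstrap analogue of the ``parabolic drift dominates'' peeling argument in Groeneboom--Wellner. This requires a uniform-in-$n$ conditional maximal inequality for $S_n^*$ on dyadic blocks, combined with the fact that the quadratic drift in $D_n^*$ eventually dominates its fluctuations at large $|h|$; here \eqref{eq:F_n-F} keeps the quadratic coefficient from degenerating, while \eqref{condf} underwrites the very linearization that produces the quadratic.
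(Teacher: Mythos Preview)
Your proposal follows essentially the same route as the paper: the switch relation, localization to scale $n^{-1/3}$, a decomposition into a centered bootstrap martingale piece and a deterministic drift, a conditional Lindeberg CLT plus tightness for the Brownian limit, locally uniform convergence of the drift to a parabola under \eqref{eq:F_n-F} and \eqref{condf}, and finally an argmin continuous mapping theorem together with a tightness/rate argument for the argmin (the paper invokes Theorem~3.4.1 of van der Vaart--Wellner in place of the peeling you sketch) and the Groeneboom stationarity to identify the limit as $\kappa\mathbb{C}$.

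Two slips to correct. First, the normalization in $M_n^*$ is off by a factor of $n$: with $W_n$ and $V_n^*$ defined as unnormalized sums, the localized process should carry the factor $n^{-1/3}$ (equivalently $n^{2/3}$ times the empirical measure $\mathbb{P}_n^*=\tfrac1n\sum_i$), not $n^{2/3}\sum_i$. Second, and more substantively, the switch relation for the GCM-based isotonic estimator runs through an \emph{argmin}, not an argmax: $\tilde F_n^*(t_0)\le a$ iff $\arg\min_s\{V_n^*(s)-aW_n(s)\}\ge T_{(t_0)}$. As you have written it, the limiting drift $\tfrac12 f(t_0)g(t_0)h^2 - xg(t_0)h$ opens upward, so there is no finite argmax and the continuous mapping step would fail. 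Replacing argmax by argmin throughout (or flipping the sign of $M_n^*$) repairs this, after which your outline coincides with the paper's proof.
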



\subsection{Inconsistency of bootstrapping from $\tilde F_n$}\label{InconsB}
Consider the case when we bootstrap from the NPMLE $\tilde F_n$. Conditional on the predictor $T_i$, we generate the bootstrap response $\Delta_i^* \sim \mbox{Bernoulli}(\tilde F_n(T_i))$. Thus we take $F_n = \tilde F_n$ 
 and approximate the sampling distribution of $\gamma_n$ by the conditional distribution of 
$ \gamma_n^* = n^{1/3} \{\tilde F_n^*(t_0) - \tilde F_n(t_0)\},$ given the data. 
For this bootstrap procedure to be consistent, the conditional distribution of $\gamma_n^*$ must converge to that of $\kappa \mathbb{C}$, in probability. 
\begin{theorem}\label{thm:InconsBootsNPMLE}
    Unconditionally  $\gamma_n^*$ does not converge in distribution to  $\kappa \mathbb{C}$, and thus the bootstrap method is inconsistent. 
\end{theorem}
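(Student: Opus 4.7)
The plan is to derive an explicit unconditional limit for $\gamma_n^*$ via the switch (argmin) representation of the NPMLE and show that this limit differs from $\kappa \mathbb{C}$ because the local drift picks up an extra random contribution inherited from the local fluctuations of $\tilde F_n$ at the $n^{-1/3}$ scale. The conceptual reason the bootstrap fails is already visible in Theorem~\ref{thm2}: since $\tilde F_n$ is a step function, the smoothness condition \eqref{condf} fails for $F_n=\tilde F_n$. I intend to convert this heuristic into an explicit limit law.

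Concretely, by the switch relation, $\gamma_n^*\ge x$ iff $\arg\min_t \sum_i\{\tilde F_n(t_0)+n^{-1/3}x-\Delta_i^*\}\mathbf{1}_{T_i\le t}\le t_0$. Localizing $t=t_0+n^{-1/3}h$ and normalizing by $n^{-1/3}$, I would split the localized objective into three pieces: (i) a noise piece built from $\Delta_i^*-\tilde F_n(T_i)$; (ii) a drift piece built from $\tilde F_n(T_i)-\tilde F_n(t_0)$; (iii) a linear piece from $n^{-1/3}x$. Conditional on the data, (i) is a sum of centered independent Bernoulli variables and, by a standard conditional CLT, converges weakly to $\sigma W(h)$ with $\sigma^2=F(t_0)(1-F(t_0))g(t_0)$, where $W$ is a two-sided Brownian motion independent of the data. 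Piece (iii), by a LLN on $\#\{T_i\in(t_0,t_0+n^{-1/3}h]\}$, contributes $x g(t_0) h$.

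The crux is piece (ii). A Riemann approximation (using the local density $g(t_0)$ of the $T_i$'s near $t_0$) reduces it, up to $o_p(1)$, to
\begin{equation*}
-g(t_0)\int_0^h n^{1/3}\bigl\{\tilde F_n(t_0+n^{-1/3}u)-\tilde F_n(t_0)\bigr\}\,du.
\end{equation*}
Here I would invoke the Groeneboom--Wellner local limit for the NPMLE: the process $V_n(u):=n^{1/3}\{\tilde F_n(t_0+n^{-1/3}u)-F(t_0)\}$ converges weakly to the slope of the greatest convex minorant of $\mathbb{Z}(u)+u^2$, a genuinely random (non-degenerate) step process $\mathbb{V}(u)$, which is emphatically \emph{not} the deterministic line $f(t_0)u$. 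Hence the drift converges to $-g(t_0)\int_0^h\{\mathbb{V}(u)-\mathbb{V}(0)\}\,du$ rather than to the pure quadratic $\tfrac12 f(t_0)g(t_0)h^2$ that arises in the original cube-root limit. Joint weak convergence of (i), (ii), (iii) and the continuous mapping theorem for $\arg\min$ of locally parabolic-plus-Brownian processes (Kim--Pollard) then yield an unconditional weak limit for $\gamma_n^*$ as the argmin of $\sigma W(h)-g(t_0)\int_0^h\{\mathbb{V}(u)-\mathbb{V}(0)\}\,du - x g(t_0) h$, with $W$ independent of $\mathbb{V}$.

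To conclude inconsistency, I would condition on the $\sigma$-field generated by $\mathbb{V}$: the conditional law is the argmin of $\sigma W$ plus a nondeterministic but measurable drift, and its conditional variance (or some computable functional such as $\mathbb{P}(\gamma_n^*\le 0)$) differs from the corresponding quantity for $\kappa\mathbb{C}$. Integrating, the unconditional mixture cannot reduce to the law of $\kappa\mathbb{C}$. The main obstacle, I expect, is handling the joint weak convergence rigorously: one must couple the conditional (bootstrap) noise $W$ with the original-data process $\mathbb{V}$ and justify the Riemann approximation uniformly on compacta in $h$, which requires tightness arguments using monotone-class entropy bounds and the tightness of $V_n$. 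Once joint convergence is in hand, the argmin continuous mapping and the non-vanishing contribution of $\mathbb{V}$ to the drift deliver the inconsistency.
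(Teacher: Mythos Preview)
Your proposal is correct and takes essentially the same route as the paper, which in turn mirrors Theorem~3.1 of \cite{sen2010inconsistency}: split the localized process into a bootstrap-noise piece converging to an independent Brownian motion and a data-driven drift whose limit is the random convex-minorant process inherited from the original NPMLE, then conclude that the resulting unconditional limit differs from $\kappa\mathbb{C}$. The only cosmetic difference is that the paper works with the greatest-convex-minorant representation of $\gamma_n^*$ (its $\mathbb{Z}^*_{n,1}+\mathbb{Z}^*_{n,2}$ decomposition), whereas you use the dual argmin/switch setup from the proof of Theorem~\ref{thm2}; your integrated slope $\int_0^h\{\mathbb V(u)-\mathbb V(0)\}\,du$ is exactly the paper's $\mathbb U_2$ after the natural change of variables. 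One small slip to fix: the sign in your switch relation is reversed --- with the objective $\sum_i\{a-\Delta_i^*\}\mathbf 1_{T_i\le t}$ the drift is negative and the argmin escapes to infinity; the correct objective, as in \eqref{vw00}, is $\sum_i(\Delta_i^*-a)\mathbf 1_{T_i\le t}$, which yields the drift $+g(t_0)\int_0^h\{\mathbb V(u)-\mathbb V(0)\}\,du$ and a well-posed argmin.
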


In fact it can be argued, as in~\cite{sen2010inconsistency}, that conditionally, $\gamma_n^*$ does not have any weak limit in probability. The inconsistency of bootstrapping from the NPMLE results from the lack of smoothness of $\tilde F_n$. At a more technical level, the lack of the smoothness manifests itself through the failure of equation~(\ref{eq:limII}) in  the Appendix. 

We illustrate through a simulation study the inconsistency of the NPMLE bootstrap method. 
The upper panel of Table \ref{table1} gives the estimated coverage probabilities of nominal 90\% confidence intervals for $F(1)$, where the true distribution of $F$ is assumed to be Exp(1), or the folded normal distribution,  $|N(0,1)|$, and $G$ is taken as the uniform distribution on $[0,2]$. We use 500 bootstrap samples to compute each confidence interval and construct 500 such  intervals.  
Throughout, we adopt this bootstrap setup unless otherwise specified. 
Table \ref{table1} shows that the coverage probabilities are much smaller than the nominal 90\% value and there is no  significant improvement as the sample size increases.

\begin{table}[h]
\begin{center}
\caption{\it Estimated coverage probabilities of nominal $90\%$ CIs for $F(1)$ for two distributions: Exp(1) and $|Z|$ with $Z\sim N(0,1).$ }{
\begin{tabular}{ccccc}
&$n$ & 100 & 200 & 500\\[3pt]
NPMLE &Exp(1) & 0.73  & 0.72   & 0.74 \\
&$ |N(0,1)|$ & 0.69 & 0.70  & 0.73  \\[3pt]
SMLE &Exp(1) & 0.89  &0.88  & 0.90 \\
&$ |N(0,1)|$ & 0.88  & 0.91  & 0.89  
\end{tabular}}
      \label{table1}
\end{center}
$\vspace{-0.3in}$
\end{table}

Furthermore, we compare the exact and bootstrapped distributions. Due to limitations of space, we only present results for $F$ being Exp(1). Figure~\ref{Figure1}(a) shows the distribution of $\gamma_n$, obtained from 10000 random samples of sample size 500, and its bootstrap estimate (that of $\gamma_n^*$) from a single sample based on 10000 bootstrap replicates. We see that the bootstrap distribution is different from that of $\gamma_n$. 
$\vspace{-0.3in}$
\begin{figure}[h!]

\center
\subfigure[NPMLE bootstrap]{\includegraphics[width=5.5cm]{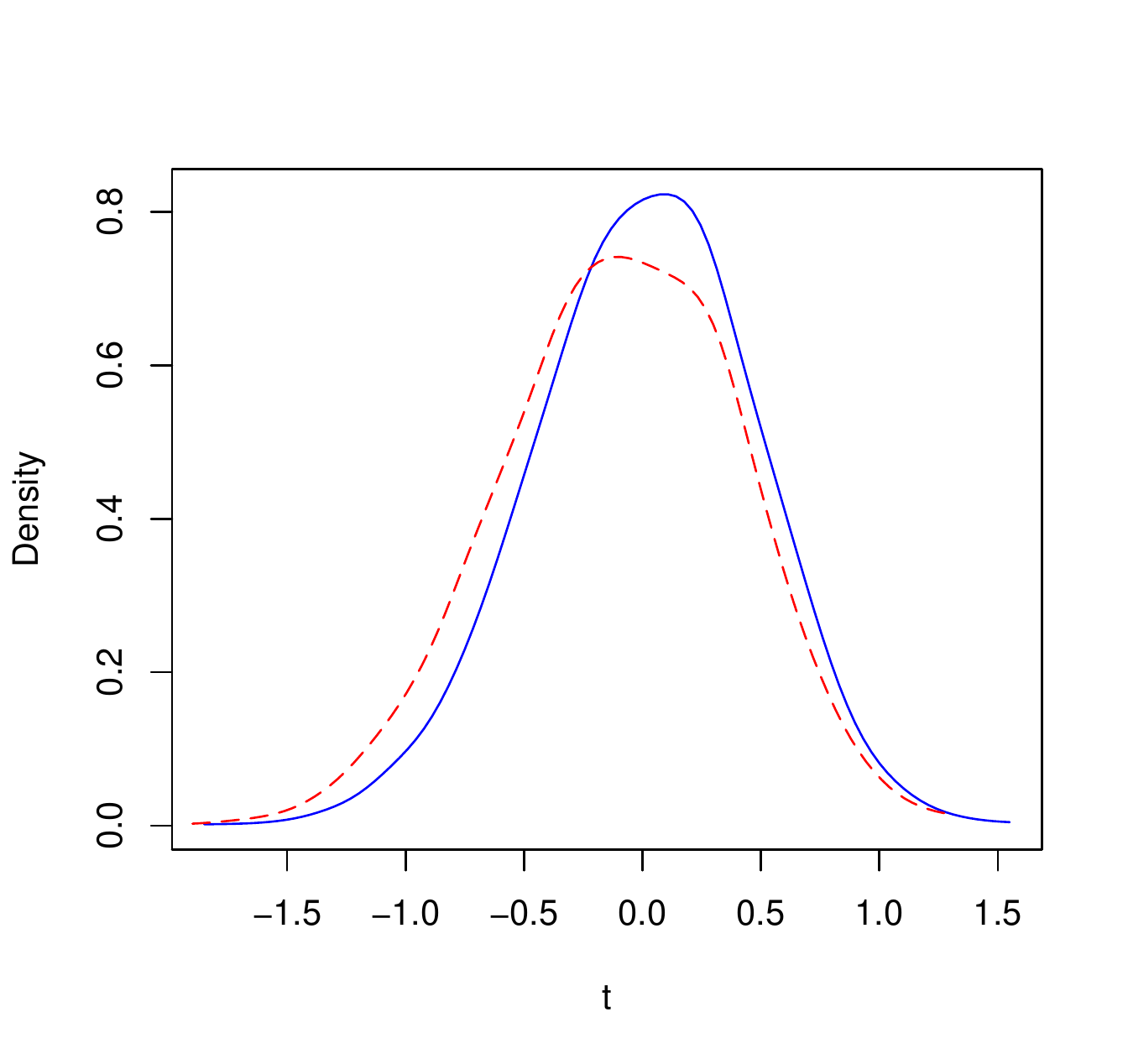}}
\quad
\subfigure[SMLE bootstrap]{\includegraphics[width=5.5cm]{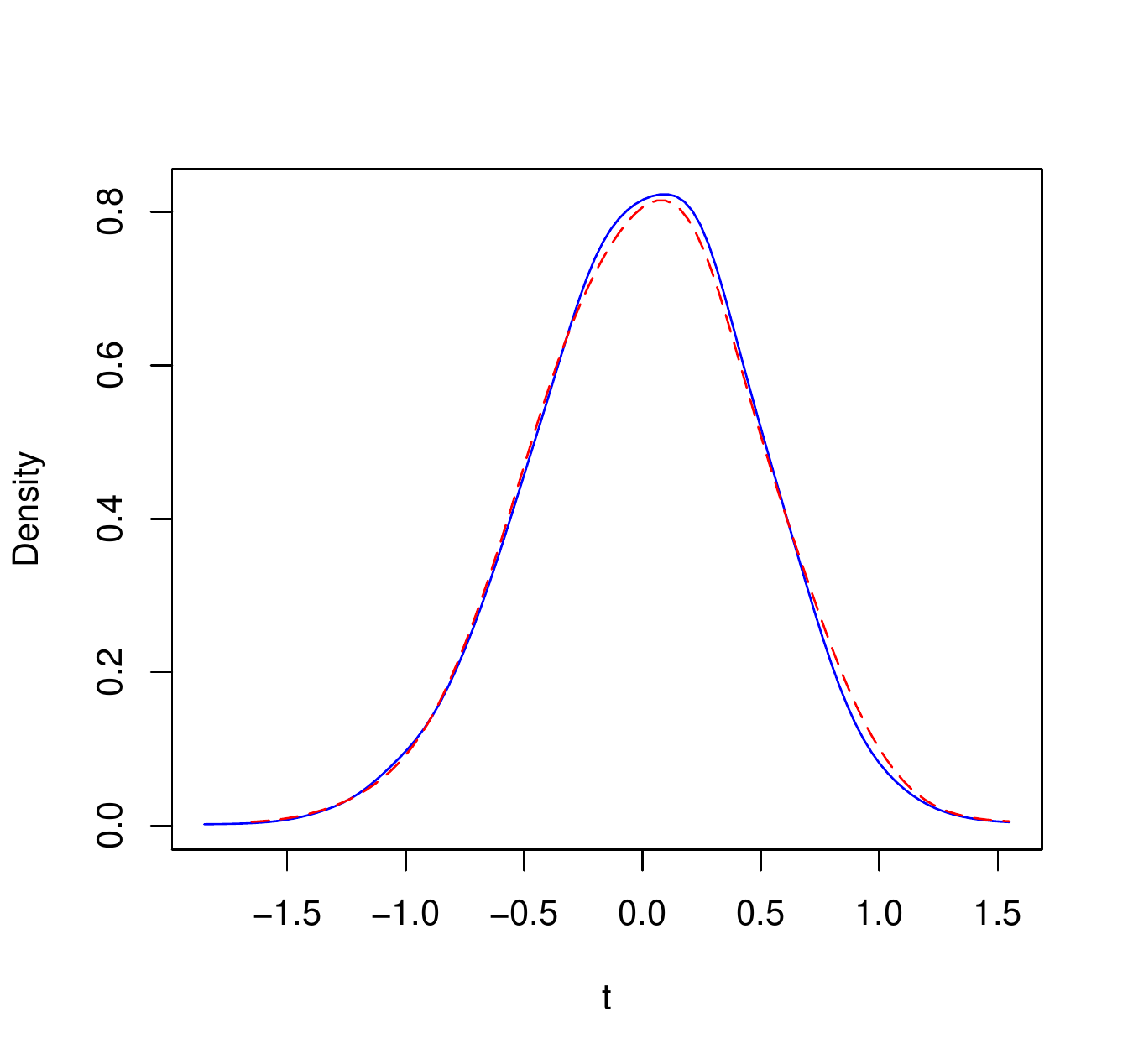}}
\caption{Estimated density functions of  $\gamma_n$ from 10000 Monte Carlo simulation (solid curve) and the bootstrap distribution of $\gamma^*_n$ when bootstrap samples are drawn from NPMLE $\tilde F_n$ (dashed, left panel) and SMLE $\check F_{n,h}$ with $h = 0.3$ (dashed, right panel).  $F$ is taken as Exp(1) and $n=500$.}
\label{Figure1}
\end{figure}

To illustrate the behavior of the conditional distribution of $\gamma_n^*$ we show in Figure \ref{Figure2}(a) the estimated $0.95$ quantiles of the bootstrap distributions for two independent data sequences as the sample size increases from 500 to 5000. The $0.95$ quantile of the limiting distribution of $\gamma_n$ is indicated by the solid line in each panel of Figure \ref{Figure2}.   
We can see that the bootstrap $0.95$ quantiles fluctuate enormously as the sample size increases from 500 to 5000 and do not converge to the $0.95$ quantile of $\kappa \mathbb{C}$. This gives strong empirical evidence that the bootstrapped $0.95$ quantiles do not converge.

\begin{figure}[h!]\center
\subfigure[NPMLE bootstrap]{\includegraphics[width=5.5cm]{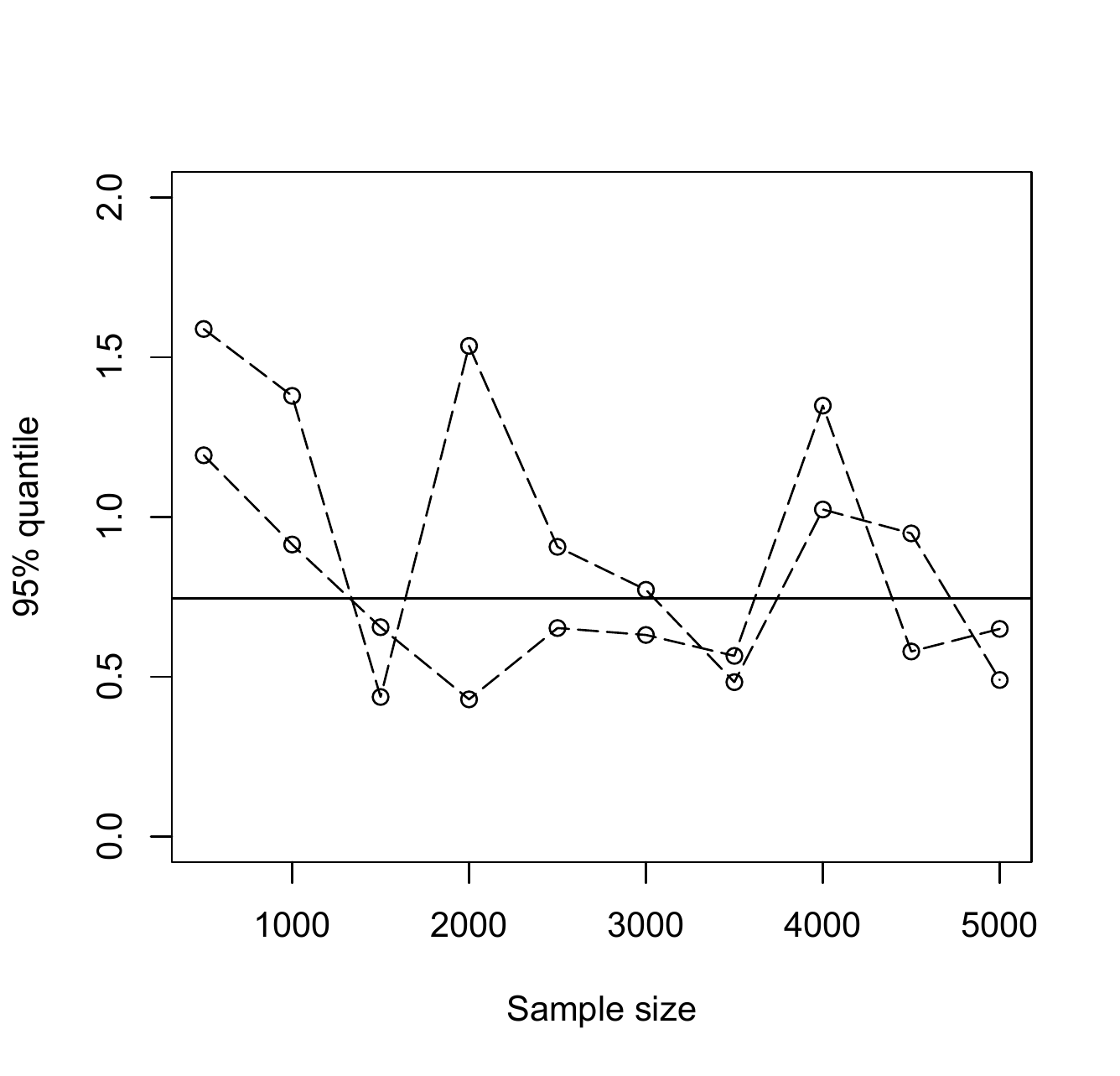}}
\quad
\subfigure[SMLE bootstrap]{\includegraphics[width=5.5cm]{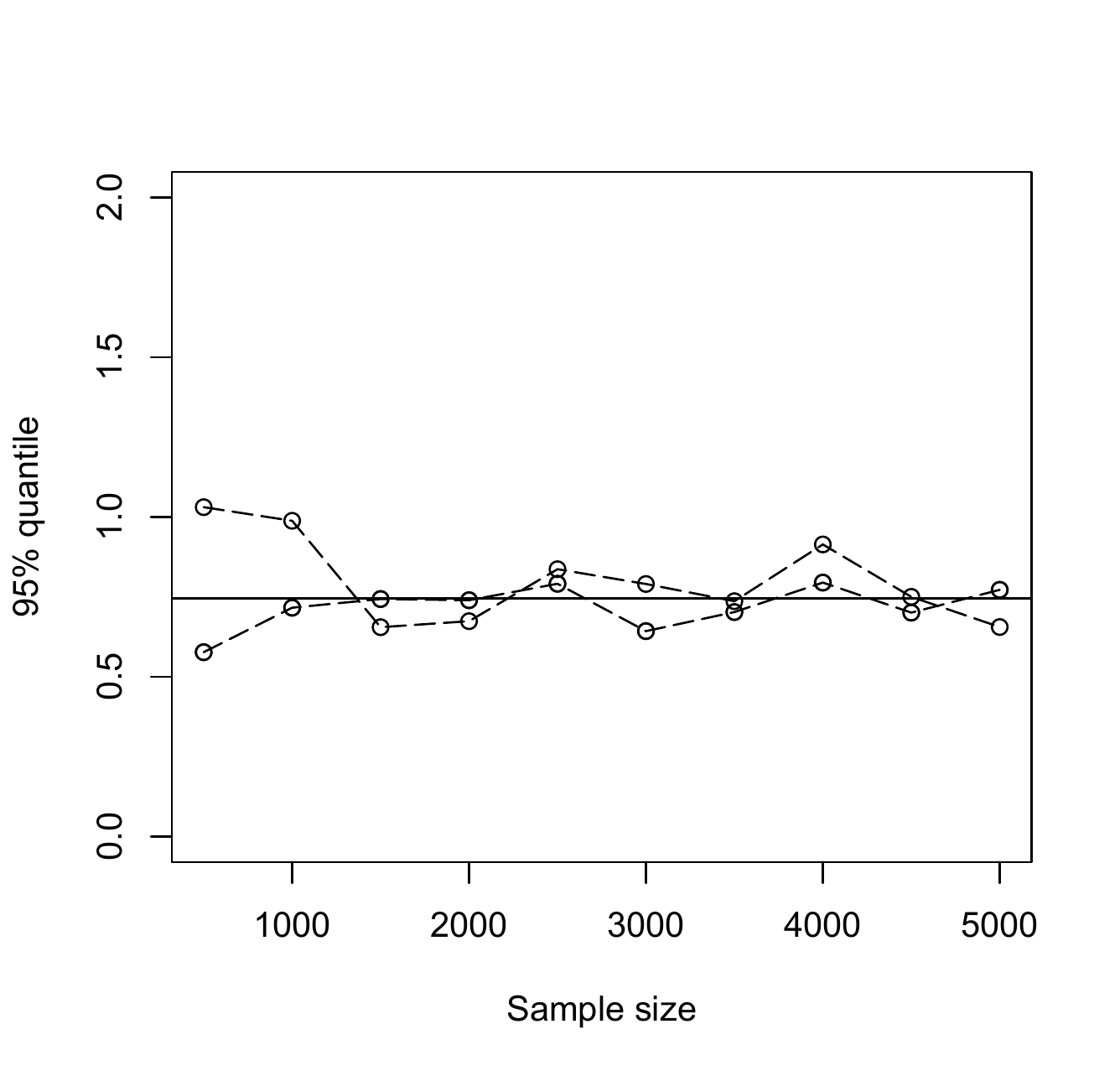}}
\caption{Estimated 0.95 quantiles of the bootstrap (dashed) and the limiting (solid) distribution.}
\label{Figure2}
\end{figure}

\subsection{Consistent bootstrap methods}\label{ConsSB}
We show that generating bootstrap samples from a suitably smoothed version of $\tilde{F}_n$ leads to a consistent bootstrap procedure. We propose the following smoothed estimator $\check F_n$ of $\tilde F_n$; see \cite*{groeneboom2010maximum}. Let $K$ be a differentiable symmetric kernel density with compact support (say $[-1,1]$) and let $\bar K(t) = \int_{-\infty}^t K(s) \, ds$ be the corresponding distribution function. Let $h$ be the smoothing parameter. Note that $h$ may depend on the sample size $n$ but, for notational convenience, in the following we write $h$ instead of $h_n$. Let  
$K_{h}(t) =  K(t/h)/h  \mbox{ and } \bar K_{h} (t)= \bar K(t/h).$
Then the smoothed maximum likelihood estimator (SMLE) of $F$ is defined as
\begin{equation}\label{eq:SmoothF_n}
\check F_{n}(t) \equiv \check F_{n,h}(t) = \int \bar K_{h}(t-s) \, d \tilde F_n(s).
\end{equation}
It can be easily seen that $\check F_{n,h}$ is a non-decreasing function, as for $t_2>t_1$, $\bar K_{h}(t_2-s)\geq \bar K_{h}(t_1-s)$ for all $s$.
 Throughout this paper, without further specification, we use the following kernel function to illustrate the performance of the SMLE bootstrap:
 \begin{eqnarray}\label{kernal}
K(t) \propto (1-t^2)^2{\mathbf 1}_{[-1, 1]}(t).
\end{eqnarray} 
$\check F_{n,h}$ is a smoothed version of the step function  $\tilde F_n$. As  discussed in the previous section, the lack of smoothness of  $\tilde F_n$ leads to  the inconsistency of the NPMLE bootstrap method. On the other hand, the SMLE successfully mimics the local behavior of $F$ at $t_0$, and consequently gives the desired consistency as shown in  Theorem~\ref{thm:ConsBoots} below. Recall that when bootstrapping from the SMLE $\check F_{n,h}$ our bootstrap sample is $\{(\Delta_i^*, T_i)\}_{i=1}^n$ where $\Delta_i^* \sim \mbox{Bernoulli}(\check F_{n,h}(T_i))$.

Following \cite{GW92}, we assume that the  point of interest $t_0$ is in the interior of the support of $F$, ${\cal S}=[0, M_0]$ with $M_0<\infty$, on which $F$ and $G$ have bounded densities $f$ and $g$ staying away from zero, respectively. Furthermore, density $g$ has a bounded derivative on ${\cal S}$.

\begin{theorem}\label{thm:ConsBoots}
Suppose $F$ and $G$ satisfy the conditions listed above. Given that $h\rightarrow 0$ and $n^{1/3}(\log n)^{-1}h\rightarrow\infty$,
 the conditional distribution of $n^{1/3} \{\tilde F_{n}^*(t_0) - \check F_{n,h}(t_0)\}$, given the data, converges  to that of $\kappa \mathbb{C}$, in probability. Thus, bootstrapping from $\check F_{n,h}$ is weakly consistent.
\end{theorem}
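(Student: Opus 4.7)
The plan is to deduce Theorem~\ref{thm:ConsBoots} from the general sufficient condition in Theorem~\ref{thm2} applied with $F_n=\check F_{n,h}$. Since the conclusion of Theorem~\ref{thm:ConsBoots} is convergence \emph{in probability} while the hypotheses of Theorem~\ref{thm2} are stated \emph{almost surely}, the reduction runs through the usual subsequence principle: along any subsequence I extract a further subsequence on which both~\eqref{eq:F_n-F} and~\eqref{condf} hold almost surely for $F_n=\check F_{n,h}$, apply Theorem~\ref{thm2} to conclude a.s.\ bootstrap convergence on that subsequence, and then collect these to obtain in-probability convergence of $H_n^*$ on the full sequence.

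For~\eqref{eq:F_n-F} I would use the decomposition
$$\|\check F_{n,h}-F\|_\infty\;\le\;\|K_h\ast\tilde F_n-K_h\ast F\|_\infty+\|K_h\ast F-F\|_\infty.$$
Convolution with the probability kernel $K_h$ is a sup-norm contraction, so the first summand is bounded by $\|\tilde F_n-F\|_\infty$, which tends to $0$ almost surely by the Marshall-type uniform consistency of the NPMLE on the compact support $\mathcal S$. The second summand tends to $0$ as $h\to 0$ by uniform continuity of $F$ on $\mathcal S$. This part is routine.

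The substantive step is~\eqref{condf}. Since $\check F_{n,h}$ is differentiable with density $\check f_{n,h}$, the mean value theorem gives $n^{1/3}|\check F_{n,h}(t_0+n^{-1/3}t)-\check F_{n,h}(t_0)-f(t_0)n^{-1/3}t|=|t|\cdot|\check f_{n,h}(\xi_n)-f(t_0)|$ for some $\xi_n$ between $t_0$ and $t_0+n^{-1/3}t$. I would split
$$\check f_{n,h}(\xi_n)-f(t_0)=\underbrace{h^{-1}\!\int K'\!\left(\tfrac{\xi_n-u}{h}\right)\frac{F(u)-\tilde F_n(u)}{h}\,du}_{\text{stochastic}}+\underbrace{\int K_h(\xi_n-u)f(u)\,du-f(t_0)}_{\text{bias}},$$
the stochastic summand obtained by integrating by parts (the compact support of $K$ kills the boundary terms once $n$ is large enough that $[\xi_n-h,\xi_n+h]\subset\mathcal S$). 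The bias summand tends to $0$ because $\xi_n\to t_0$ and $h\to0$ and $f$ is continuous at $t_0$. For the stochastic summand, the known almost-sure sup-norm rate $\|\tilde F_n-F\|_\infty=O((\log n/n)^{1/3})$ on $\mathcal S$ gives an upper bound of order $(\log n/n)^{1/3}\cdot h^{-1}\int|K'|$, which vanishes precisely under the stated rate $n^{1/3}(\log n)^{-1}h\to\infty$.

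The main obstacle I anticipate is ensuring the stochastic bound holds \emph{uniformly} in $\xi_n$ over the shrinking window of radius $n^{-1/3}|t|$ around $t_0$, not merely pointwise, and that the integration-by-parts argument is carried out cleanly along the subsequence (so that the boundary of $\tilde F_n$ does not contribute). Once this uniform control is secured, the two hypotheses of Theorem~\ref{thm2} are verified on the extracted subsequence and the conclusion of Theorem~\ref{thm:ConsBoots} follows directly from the subsequence principle.
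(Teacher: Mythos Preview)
Your approach is correct and is essentially the paper's own: both verify the two hypotheses of Theorem~\ref{thm2} for $F_n=\check F_{n,h}$ by integration by parts, bounding the stochastic error by $\|\tilde F_n-F\|_\infty\cdot h^{-1}\int|K'|$ and invoking the uniform rate of the NPMLE together with $n^{1/3}(\log n)^{-1}h\to\infty$; your contraction argument for~\eqref{eq:F_n-F} is in fact a bit cleaner than the paper's, which also routes through the $h^{-1}$ bound there. Two small remarks: your ``main obstacle'' is not one, since the bound $\|\tilde F_n-F\|_\infty\, h^{-1}\int|K'|$ is already independent of $\xi_n$ and hence uniform on compacta in $t$; and the paper only cites the in-probability rate $\|\tilde F_n-F\|=O_p(n^{-1/3}\log n)$ (Lemma~5.9 of Groeneboom--Wellner), not an almost-sure one, but that is exactly what your subsequence reduction needs.
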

We use simulation to illustrate the consistency of the SMLE bootstrap procedure. The lower panel of Table \ref{table1} gives the estimated coverage probabilities of nominal 90\% confidence intervals for $F(1)$ (when $F$ is assumed to be Exp(1) or $|N(0,1)|$ and  $G$ is taken as the uniform distribution on $[0,2]$). 
Here we take bandwidth $h=0.3$. We see that the coverage probabilities are consistent with the nominal 90\% level. Figure \ref{Figure1}(b) compares the distributions of $\gamma_n$, obtained from 10000 random samples of size 500, and the SMLE bootstrap estimator from a single sample, when $F$ is Exp(1). In addition, Figure \ref{Figure2}(b) shows the estimated 0.95 quantiles of the bootstrap distributions for two independent data sequences. We see that for the SMLE bootstrap, the estimated 0.95 quantile is converging to the appropriate limiting value. This validates our theoretical result. 
\subsection{Choice of the tuning parameter in practice}\label{TunPara}
We propose a bootstrap-based method of choosing the smoothing bandwidth $h$, required for computing SMLE $\check F_{n,h}$. A commonly used criterion for judging the efficacy of bandwidth selection techniques is the mean squared error (MSE) for bandwidth $h$,
\begin{equation}\label{eq:IMSE}
\mbox{MSE}(h) =  E [ \{\check F_{n,h}(t_0) - F(t_0) \}^2].
\end{equation}
However, the above quantity is not directly computable since $F$ is unknown in applications. To overcome this difficulty, different procedures have been explored in the literature.  Among them, bootstrap is again one of the most widely used methods and we use it to estimate the MSE in our paper. This bootstrap approach works as follows. The idea is to approximate \eqref{eq:IMSE} by
\begin{equation}\label{eq:IMSE1}
\mbox{BMSE}(h) = \frac{1}{B} {\sum}_{i=1}^{B}  \{\check F^*_{n,h}(t_0) - F_n(t_0) \}^2,
\end{equation}
where $\check F^*_{n,h}(t_0)$ is constructed as in \eqref{eq:SmoothF_n} (with bandwidth $h$) from data $\{(T_i,\Delta^*_i)\}_{i=1}^n$ with $\Delta^*_i \sim \mbox{Bernoulli}(F_n(T_i))$,  for $i=1,\ldots,n$, and $B$ is a large number. Throughout, we take $B$ = 500. In the following we study two choices of $F_n$, the NPMLE and the SMLE, and show that the NPMLE does not give consistent estimates of MSE($h$) while the SMLE performs well. 

A natural choice of $F_n$ is $\tilde F_n$, the NPMLE based on the data. However, as shown in Figure~\ref{Figure3}(a), the estimated MSE curves for different data sets are not consistent with the true curve,  simulated from 500 independent samples. Here we use the kernel function  in \eqref{kernal}. For each MSE curve, we  approximate it using BMSE($h_i$) with $h_i =i/20$, for $ i=1,\ldots,20.$

\begin{figure}[h!]
\subfigure[MSE from NPMLE ]{\includegraphics[width=4.85cm]{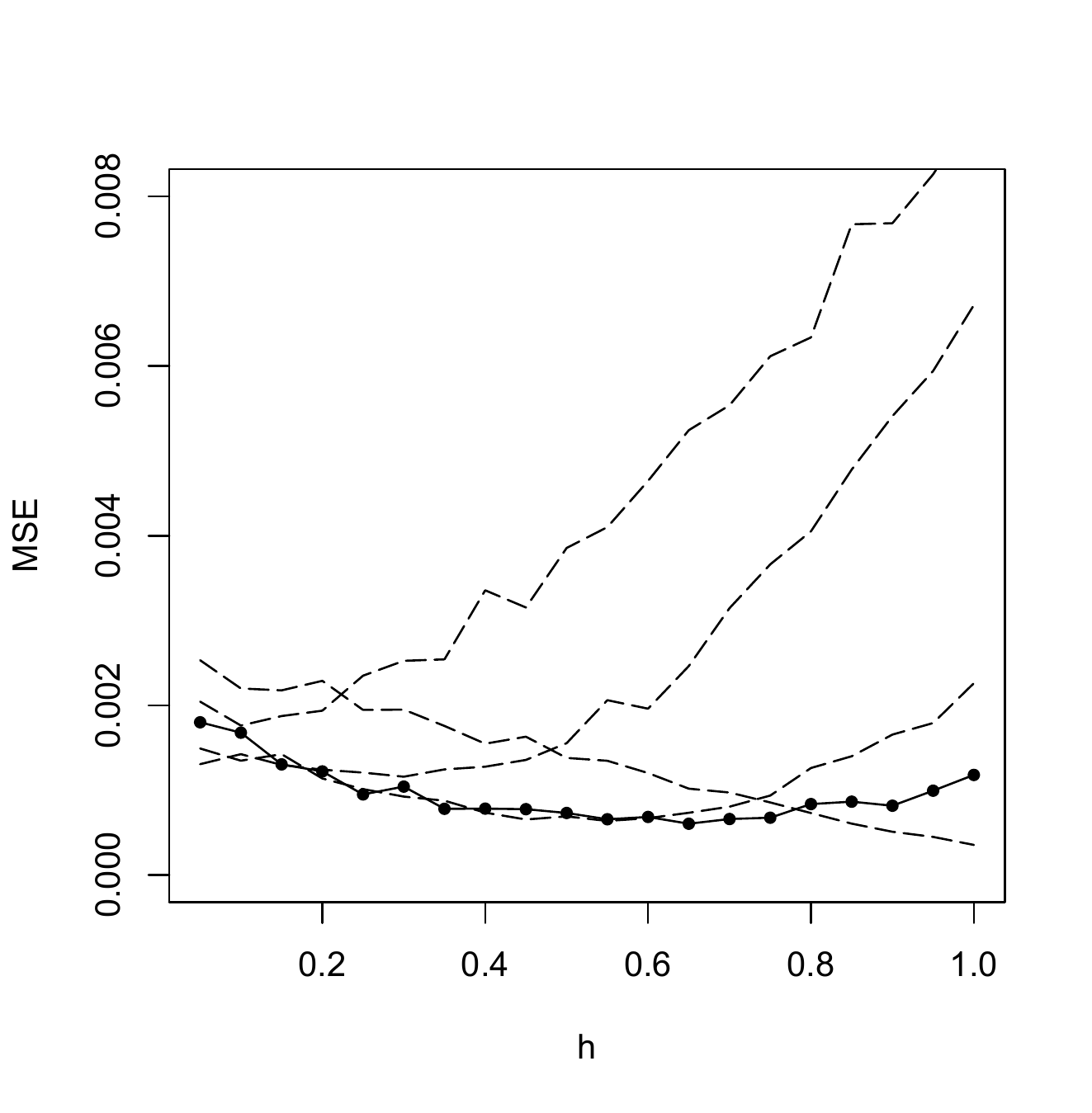}}
\subfigure[MSE from SMLE ]{\includegraphics[width=4.85cm]{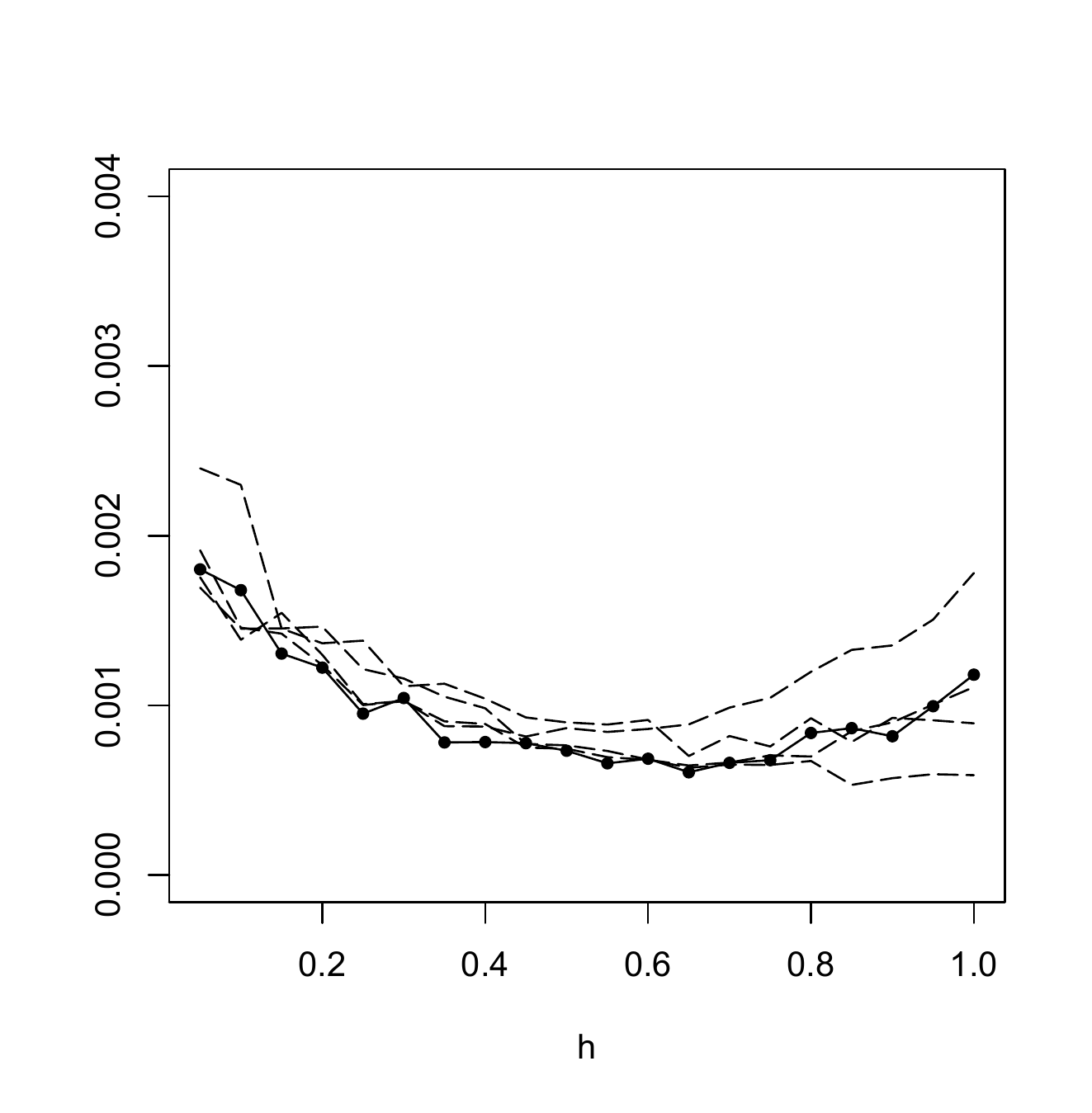}}
\subfigure[MSE with different $h_0$]{\includegraphics[width=4.85cm]{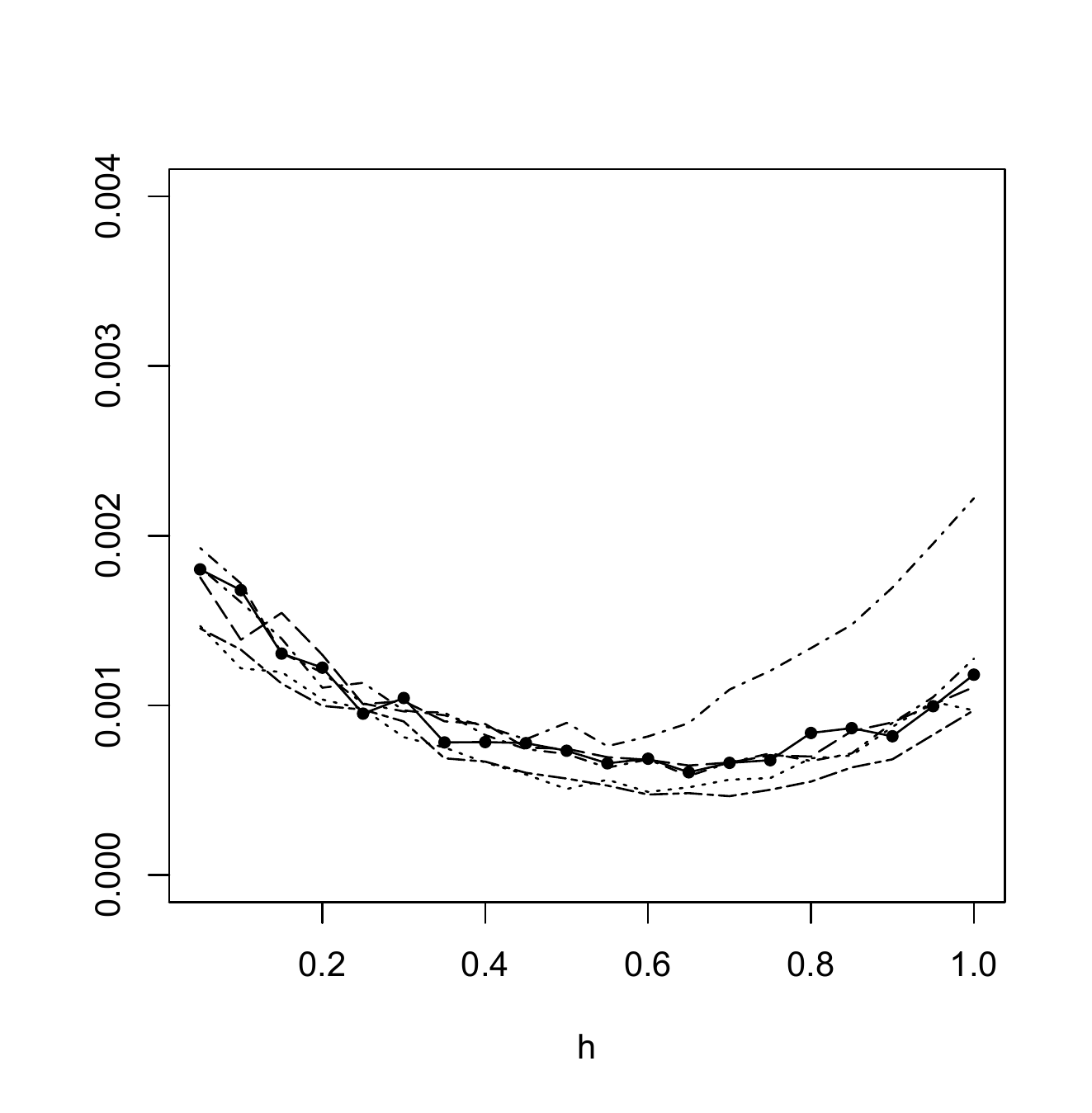}}
\caption{(a) Estimated MSE curves from the NPMLE bootstrap (dashed) and the true MSE based on 500 random samples (solid); 
(b) estimated MSE from the SMLE with pre-chosen bandwidth $h_0=0.5$ (dashed);
(c) estimated MSE  from the SMLE with $h_0=0.3,0.4,0.5,0.6$ and $0.7$ (dashed). 
 $F$ is taken as Exp(1) and $n=1000$.}
 $\vspace{-0.3in}$
\label{Figure3}
\end{figure}

%

Another choice of $F_n$ is $\check F_{n,h_0}$, the SMLE with a pre-chosen bandwidth $h_0$. This strategy is commonly used to select the bandwidth in density estimation; see, e.g., \cite{hazelton1996bandwidth} and \cite{gonzalez1996bootstrap}.
We choose $h_0$ as the initial bandwidth and sample $\Delta_i^* \sim \mbox{Bernoulli}(\check F_{n,h_0}(T_i))$. Then the MSE is estimated by
\begin{equation}\label{eq:IMSE1}
\frac{1}{B} {\sum}_{i=1}^{B} \{\check F^*_{n,h}(t_0) - \check F_{n,h_0}(t_0)\}^2.
\end{equation}
 Under the same simulation setup as for the NPMLE,  we show in Figure \ref{Figure3}(b) that the estimated MSE curves from the SMLE are  consistent with the true curve  based on 500 random samples.

A related issue in practice is how to choose the optimal initial smoothing bandwidth $h_0$. As in density estimation, where it has been shown that different initial values of bandwidths yield consistent estimation results, we also illustrate that different $h_0$ values give similar estimated MSE curves and therefore do not affect our final estimation much. We illustrate this through a simulation study.  We choose 5 initial values of $h_0$, $h_0=0.3, 0.4,0.5,0.6$ and $0.7$, and show in Figure \ref{Figure3}(c) that the estimated  MSE curves with  different $h_0$ values have similar shapes and are consistent with the true MSE curve. The minimum values of the estimated MSE curves are also close to the true minimum.

\section{Interval censoring, case 2}\label{case2}
In case 2 censoring, an individual is checked exactly at two time points $(T_1,T_2)$. Suppose that we have $n$ independent and identically distributed random vectors $\{(X_i,T_{i,1},T_{i,2})\}_{i=1}^n$, where for each pair $(X_i,T_{i,1}, T_{i,2})$, $X_i \sim F$ and $(T_{i,1}, T_{i,2})$ are independent and $T_{i,1}<T_{i,2}$. For the $i$th individual, we  observe $(T_{i,1}, T_{i,2}, \Delta_{i,1},\Delta_{i,2})$ where $\Delta_{i,1} = \mathbf{1}_{X_i \le T_{i,1}}  \mbox{ and }  \Delta_{i,2} = \mathbf{1}_{T_{i,1}<X_i \le T_{i,2}}$. Our goal is to estimate $F$ at time $t_0$, i.e., $F(t_0)$. 
The NPMLE $\tilde F_n$ for $F$ maximizes the log-likelihood function
\begin{eqnarray*}\label{npmle2}
{\sum}_{i=1}^n \left\{\Delta_{i,1} \log \mathbb{F}(T_{i,1}) +  \Delta_{i,2} \log(\mathbb{F}(T_{i,2}) - \mathbb{F}(T_{i,1})) +(1-\Delta_{i,1}-\Delta_{i,2})\log(1-\mathbb{F}(T_{i,2}))\right\}
 \end{eqnarray*}
over all distribution functions $\mathbb{F}$. Deriving the limiting distribution of $\tilde F_n$ in this case is quite difficult  and is still an open problem. \cite{groeneboom1991nonparametric} instead studied a one-step estimate $F^{(1)}_n$, obtained at the first step of the iterative convex minorant algorithm, starting the iterations from the underlying true distribution function $F$, and conjectured that  $F^{(1)}_n$  is asymptotically equivalent to the NPMLE. This conjecture is called {\it the working hypothesis} and is still unproved. Note that we cannot, of course, compute $F^{(1)}_n$ in practice. In the following, we assume that the working hypothesis holds and focus on bootstrapping the distribution of the one-step estimator. 

Let $H$ be the distribution function of observation times $(T_1, T_2)$ and assume that $F$ and $H$ are both differentiable at $t_0$ and $(t_0,t_0)$, respectively, with positive derivatives $f(t_0)$ and $h(t_0,t_0)$. From \cite{groeneboom1991nonparametric} and \cite{GW92}, we have that  
\begin{equation}\label{eq:limF_n2}
 (n\log n)^{1/3} \{F^{(1)}_n(t_0) - F(t_0)\} \stackrel{}{\rightarrow} \kappa_1 \mathbb{C},
\end{equation}
in distribution, where $\kappa_1 = \{\frac{3}{4} f(t_0)^2/h(t_0,t_0)\}^{1/3}$ and $\mathbb{C}$ is as defined in (\ref{eq:limF_n1}).

Under the working hypothesis,   the NPMLE also has the above limiting distribution in (\ref{eq:limF_n2}). 
Again, due to the nuisance parameters present in the limiting distribution, the above result cannot be directly applied to construct a confidence interval for $F(t_0)$. In the following we investigate the (in)-consistency of bootstrap methods and show that the smoothed model based bootstrap gives consistent result while bootstrapping from the NPMLE does not.

Let $ F_n$ be a sequence of distribution functions that converge weakly to $F$. We condition on $(T_{i,1},T_{i,2})$, and generate the bootstrap response $(\Delta_{i,1}^*, \Delta_{i,2}^*)$ by sampling $(\Delta_{i,1}^*, \Delta_{i,2}^*, 1-\Delta_{i,1}^*-\Delta_{i,2}^*) \sim \mbox{Multinomial}(1; F_n(T_{i,1}),  F_n(T_{i,2}) -F_n(T_{i,1}), 1-F_n(T_{i,2})).$
Then based on the bootstrap sample $\{(T_{i,1}, T_{i,2},\Delta^*_{i,1},\Delta^*_{i,2})\}_{i=1}^n$, we construct the one step NPMLE estimator $\tilde F^{*,(1)}_{n}(t)$, starting the iterations from $F_n$. 

\begin{theorem}\label{Thmcase2th}
For  a sequence of distribution functions $ F_n$ that converge weakly to $F$,  if  the following convergence holds, almost surely, uniformly on compacts (in $t$) \begin{equation}\label{condc2'}
\lim_{n\rightarrow\infty}(n\log n)^{1/3}|F_n(t_0+(n\log n)^{-1/3}t)-F_n(t_0) - f(t_0)(n\log n)^{-1/3}t|=0,
\end{equation} 
then,  the conditional distribution of $(n\log n)^{1/3} \{F_n^{*,(1)}(t_0)-F_n(t_0)\}$, given the data, converges  to $\kappa_1 \mathbb{C}$, almost surely.
\end{theorem}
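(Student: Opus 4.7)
The proof closely parallels that of Theorem \ref{thm2}, with the one-step construction of \cite{groeneboom1991nonparametric} taking the place of the NPMLE characterization. First, I would invoke the fact that the iterative convex minorant algorithm produces $F_n^{*,(1)}(\cdot)$ as the left derivative of the greatest convex minorant of a cumulative sum diagram $V_n^*$ built from the bootstrap sample $\{(T_{i,1},T_{i,2},\Delta_{i,1}^*,\Delta_{i,2}^*)\}$ and the initial estimator $F_n$. The switching (or argmin) relation then converts the event $\{(n\log n)^{1/3}(F_n^{*,(1)}(t_0) - F_n(t_0)) > x\}$, for fixed $x$, into an event about the location of the argmax of an appropriately localized version of $V_n^*$ minus a linear correction.

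Second, I would localize by substituting $t = t_0 + (n\log n)^{-1/3} s$ and rescaling $V_n^*$ by $(n\log n)^{2/3}$, decomposing the result into a conditional mean (deterministic drift) and a centered stochastic part. Condition \eqref{condc2'} plays the role of \eqref{condf} in Theorem \ref{thm2}: it forces $F_n$ to be approximately linear at $t_0$ with slope $f(t_0)$ on the $(n\log n)^{-1/3}$ scale, which is exactly what is needed for the drift term to converge, uniformly on compacts in $s$, to a parabola whose curvature is determined by $f(t_0)$ and $h(t_0,t_0)$. This reproduces the same parabolic drift that appears when the iterative convex minorant algorithm is initialized at the true $F$ in \cite{groeneboom1991nonparametric}.

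For the stochastic piece, since $(\Delta_{i,1}^*,\Delta_{i,2}^*)$ are, conditionally on the data, independent multinomial draws with parameters determined by $F_n$, a conditional Lindeberg--Feller CLT applied to the relevant weighted partial-sum process (now indexed by the local coordinate $s$) yields convergence to a two-sided Brownian motion $\mathbb{Z}$. The logarithmic factor in the rate enters exactly as in \cite{groeneboom1991nonparametric}, from the $(F_n(T_{i,2})-F_n(T_{i,1}))^{-1}$ weights integrating logarithmically near the diagonal $T_1=T_2$; the weak convergence $F_n \Rightarrow F$, together with the regularity of $H$ at $(t_0,t_0)$, lets one transfer Groeneboom's variance computations to the bootstrap world. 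Combining with the drift gives a limit of the form $\mathbb{Z}(s)+c s^2$, and the argmax continuous mapping theorem (Lemma 2.9 of \cite{GW92}) then yields the limit $\kappa_1 \mathbb{C}$.

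The main obstacle lies in this last step: one must verify that the bootstrap cumulative sum process, restricted to shrinking $(n\log n)^{-1/3}$-neighborhoods of $t_0$, genuinely reproduces Groeneboom's logarithmic variance asymptotics when the initial distribution is $F_n$ rather than the true $F$. This requires uniform-integrability control of the singular $(F_n(T_{i,2})-F_n(T_{i,1}))^{-1}$ weights near the diagonal, which rests on a quantitative form of $F_n \to F$ over a global neighborhood rather than the purely local condition \eqref{condc2'} at $t_0$; the working hypothesis is implicitly leaned on here, since treating a true one-step estimator (rather than the NPMLE) allows one to avoid re-analyzing the later iterates of the convex minorant algorithm and lets the argument be reduced cleanly to the CLT plus argmax scheme above.
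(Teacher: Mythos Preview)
Your proposal is correct and follows essentially the same route as the paper: the switching relation for the one-step estimator, localization at scale $(n\log n)^{-1/3}$, decomposition into a drift piece (handled by \eqref{condc2'}) and a centered piece converging to Brownian motion (the paper packages these as Lemma~\ref{Thmcase2}, using a martingale CLT rather than Lindeberg--Feller), followed by tightness (Lemma~\ref{lemma1}) and the stationarity/Brownian-scaling finish from \cite{groeneboom1989brownian}. The obstacle you isolate in your last paragraph is resolved in the paper not through uniform integrability or a global quantitative bound on $F_n-F$, but by the simpler observation that with $P_n$-probability tending to one no observation has both $T_{i,1}$ and $T_{i,2}$ inside the $(n\log n)^{-1/3}$-window together with $T_{i,1}<X_i\le T_{i,2}$, so one may restrict throughout to $T_2-T_1>B_0(n\log n)^{-1/3}$ and the singular weights $(F_n(T_2)-F_n(T_1))^{-1}$ are then controlled by the purely local condition \eqref{condc2'}.
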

The above theorem gives a sufficient condition for the bootstrap procedure to be consistent. In particular, for the SMLE bootstrap,  we have $F_n  = \check F_{n,h} $ and condition  (\ref{condc2'}) holds.  Let $\tilde F_{n,h}^{*,(1)}(t_0) $ be the corresponding one-step bootstrap estimator starting from  $\check F_{n,h}$ and let $\tilde F_n^*(t_0)$ be the NPMLE of the bootstrap sample. 
A similar argument as in the proof of Theorem \ref{thm:ConsBoots} gives 
that with properly chosen bandwidth $h$, conditionally,   $(n\log n)^{1/3}$ $ \{\tilde F_{n,h}^{*,(1)}(t_0) - \check F_{n,h}(t_0)\}$ converges in distribution to $\kappa_1 \mathbb{C}$, in probability. Then under the working hypothesis, conditionally, $(n\log n)^{1/3} \{\tilde F_{n}^{*}(t_0) - \check F_{n,h}(t_0)\}$ converges in distribution to $\kappa_1 \mathbb{C}$, in probability, and bootstrapping from $\check F_{n,h}$ is weakly consistent.
\begin{table}[h]
\begin{center}
\caption{\it Estimated coverage probabilities of nominal $90\%$ CIs for case 2 censoring.}{
\begin{tabular}{cccc}
$n$ & 100 & 200 & 500\\[3pt]
\hbox{NPMLE} &0.74   &0.76   &0.75\\
\hbox{SMLE} &0.88 & 0.89 & 0.91  
\end{tabular}}
\label{tablecase2}
\end{center}
 $\vspace{-0.3in}$
\end{table}

On the other hand, the NPMLE is a step function and does not satisfy \eqref{condc2'}, and therefore the above theorem is not applicable. In fact for bootstrapping from the NPMLE, the three convergence results in Lemma \ref{Thmcase2} in the Appendix may not hold. These theoretical arguments are supported by numerical results. Table \ref{tablecase2} shows that the NPMLE method has a low coverage rate for nominal $90\%$ confidence intervals while the smoothed bootstrap gives consistent results. Here  $F$ is taken as Exp(1) and $T_1, T_2$ are  the order statistics of two uniformly distributed random variables on $[0,2]$. For the SMLE bootstrap, the smoothing bandwidth $h$ is chosen as $n^{-1/5}$.
\section{Mixed case interval censoring}\label{MixedCase}
\subsection{NPMLE and SMLE bootstrap}
Under mixed case interval censoring, for an individual, we have $K$ observation times $T_{1}\leq \ldots\leq T_{K}$, where $K$ is an integer-valued random variable. 
 Let $\Delta_k=\mathbf{1}_{T_{k-1}<X\leq T_k}$, for $k = 1,\ldots,K$, with $T_0=0$. We observe $n$ independent and identically distributed copies of   $(T_k, \Delta_k: k=1,\ldots, K)$, i.e.,  $\{(T_{i,k}, \Delta_{i,k}: k=1,\ldots, K_i)\}_{i=1}^n$. Again we are interested in estimating $F$ at $t_0$. The NPMLE $\tilde F_n$ for $F$ maximizes the log-likelihood function
\begin{equation}\label{Mixnpmle}
 {\sum}_{i=1}^n\Big[{\sum}_{k=1}^{K_i} \Delta_{i,k} \log(\mathbb{F}(T_{i,k}) - \mathbb{F}(T_{i,k-1})) +\Big(1-{\sum}_{k=1}^{K_i} \Delta_{i,k}\Big)\log(1-\mathbb{F}(T_{i,K_i}))\Big]
 \end{equation}
 over all distribution functions $\mathbb{F}$. The limiting distribution of the NPMLE is unknown, and this complicates the problem of constructing confidence intervals for $F(t_0)$.  
 
In the following, we focus on the NPMLE $\tilde F_n$  and compare empirically the performance of different bootstrap methods in estimating the distribution of $n^{-1/3}(\tilde F_n(t_0)-F(t_0))$. We illustrate the inconsistency of bootstrapping from the NPMLE and the consistency of bootstrapping from a suitably smoothed NPMLE. As for each subject, only one $\Delta_{i,k}$, $k =1,...,K_i$, is 1, the computation of the NPMLE of mixed case interval censoring can be reduced to the case 2 interval censoring, as noted in \cite{huang1997interval} and \cite{song2004estimation}. This can be done efficiently by the iterative convex minorant algorithm; see \cite{GW92},  \cite{wellner1997hybrid}, and \cite{jongbloed1998iterative}. In this paper we use  R package ``Icens'' to estimate the NPMLE.

 \begin{table}[h]
\begin{center}
\caption{\it Estimated coverage probabilities of nominal  90\% CIs for mixed case interval censoring.}
\begin{tabular}{ccccc}
&$n$  & 100 & 200 & 500\\[3pt]
$\hbox{NPMLE}$& coverage &0.76  &0.77  & 0.76 \\
&length & 0.27 & 0.20  & 0.14 \\ [3pt]
$ \hbox{SMLE}$& coverage & 0.88 & 0.87  & 0.89  \\
&length & 0.25 & 0.20 & 0.14 \\ 
\end{tabular}
\label{table3}
\end{center}
\end{table}

Table \ref{table3} shows the estimated coverage probabilities of nominal $90\%$ confidence intervals when bootstrapping from the NPMLE and the SMLE. We can see that the performance of the SMLE is much better than that of the NPMLE in both coverage and  interval lengths. Here  $F$ is taken as Exp(1) and $K$ is chosen from the uniform distribution on set $\{1,2,3\}$. Given $K$, the observation times $(T_1,\ldots, T_K)$ are chosen as the $K$ order statistics from the uniform distribution on [0,2]. 
The smoothing bandwidth for the SMLE is chosen as $n^{-1/5}$.

\begin{figure}
\subfigure[MSE from NPMLE]{\includegraphics[width=4.8cm]{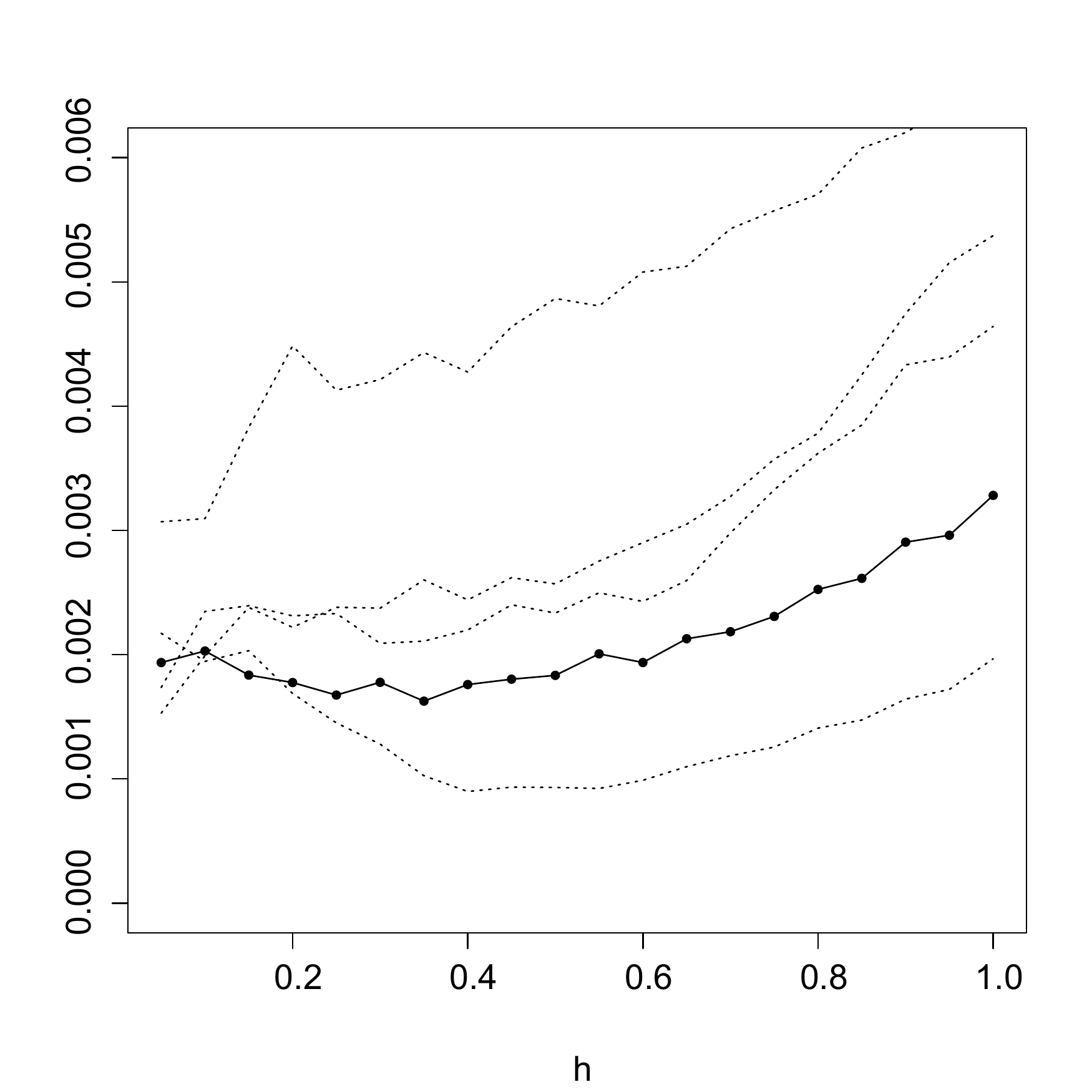}}
\subfigure[MSE from SMLE ]{\includegraphics[width=4.8cm]{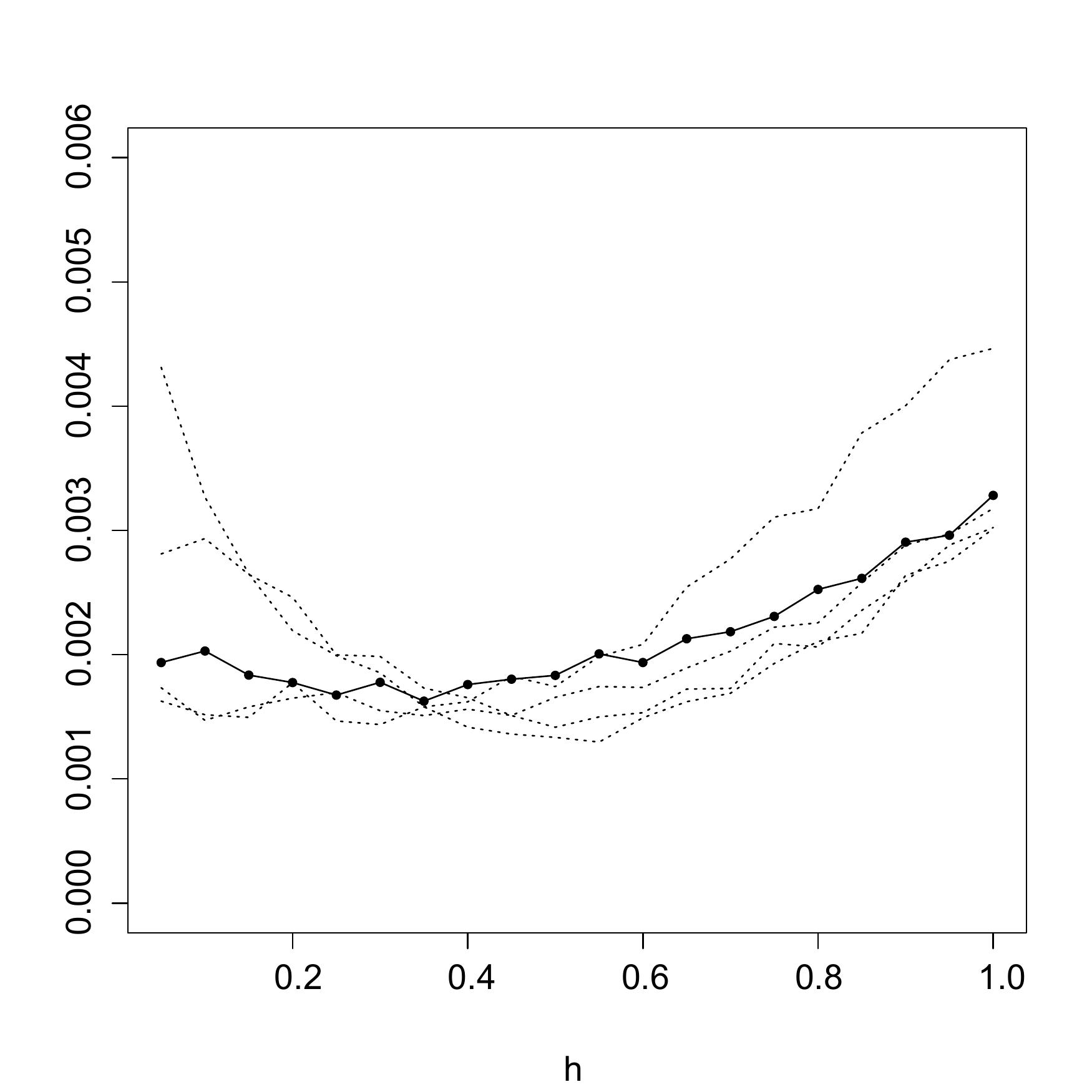}}
\subfigure[MSE with different $h_0$]{\includegraphics[width=4.8cm]{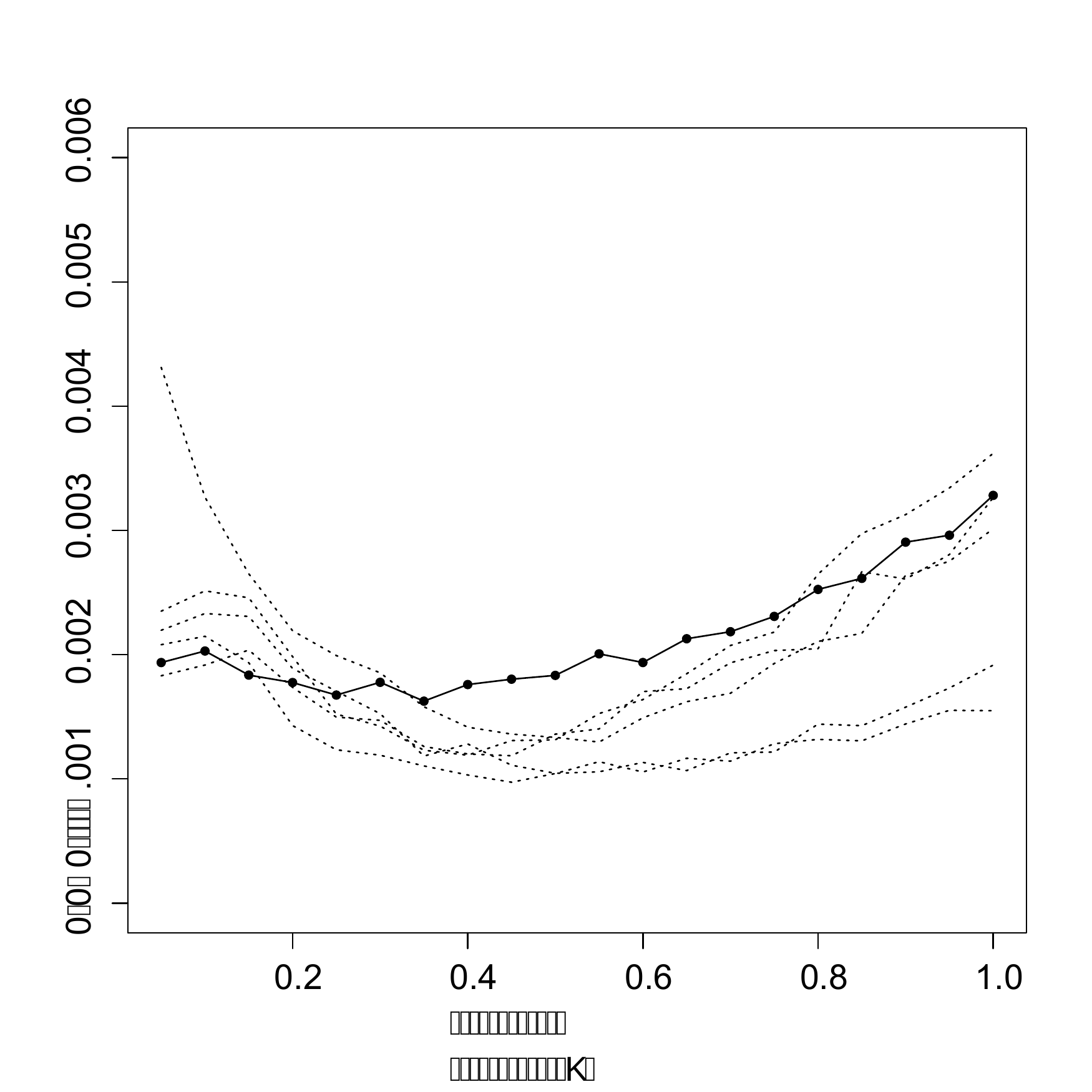}}
\caption{(a) The true MSE based on 500 random samples (solid) and estimated MSE from the NPMLE bootstrap (dashed); (b) MSE  from the SMLE with  $h_0=0.5$ (dashed); 
(c)  MSE from  the SMLE with $h_0=0.3,0.4,0.5,0.6$ and $0.7$ (dashed). }
\label{FigureMCS}
\end{figure}

We minimize the MSE criterion in (\ref{eq:IMSE}) to estimate the optimal bandwidth $h$ in the smoothed bootstrap method. Two choices of $F_n$, the NPMLE and a SMLE with an initial bandwidth $h_0$, are studied. 
The corresponding estimated MSE curves are displayed in Figure \ref{FigureMCS}. It can be clearly seen that curves based on the NPMLE are not consistent while those from the SMLE consistently estimate the true MSE curve. The effect of choosing different initial values of $h_0$ to obtain the optimal smoothing bandwidth is shown in Figure \ref{FigureMCS}(c). As in the current status model, we can see that the estimated curves are robust to the initial choice of $h_0$.

\subsection{Comparison with the existing methods}
To further illustrate the superiority of the proposed smoothed bootstrap method, we compare its finite sample performance with  the pseudolikelihood  method of~\cite{sen2007pseudolikelihood} and the $m$-out-of-$n$ bootstrap method of~\cite{LP06}. 

We present simulations from a mixed case censoring model under the same setup as in Section 3.1 of \cite{sen2007pseudolikelihood}. We take $F$ to be Exp(1) and $K$ is chosen uniformly from  $\{1,2,3,4\}$. Given $K$,  $(T_1,\ldots, T_K)$ are the $K$ order statistics from the uniform distribution on [0,3].  
We generate 1000 samples for each sample size shown in Table \ref{table_compare} and the corresponding $95\%$ confidence intervals for $F(\log 2)=0.5$ are constructed.  For the SMLE bootstrap, we choose bandwidths $h=0.5$ for $n=50, 100, 200, 500$ and $h=0.3$ for $n=1000, 1500, 2000$. Table \ref{table_compare} shows that the SMLE bootstrap gives more consistent results than the pseudolikelihood and $m$-out-of-$n$ methods in general.  The pseudolikelihood method is more anti-conservative while the $m$-out-of-$n$ method is in general conservative and has wider  intervals. 

\begin{table}[h!]
\begin{center}
\caption{\it Estimated coverage probabilities  of nominal $95\%$ CIs for mixed case interval censoring. The results for the pseudolikelihood (PL) and $m$-out-of-$n$ methods are taken from Table 1 in~\cite{sen2007pseudolikelihood}.}{
\begin{tabular}{llccccccc}
&$n$  & 50 &100 & 200 & 500&1000 &1500 &2000  \\[3pt]
$ \hbox{SMLE}$&coverage &0.92  & 0.94  &0.95  & 0.95 & 0.94  &0.95  & 0.95\\
&length &0.51  & 0.38   & 0.30 & 0.20  & 0.15  & 0.13  & 0.12\\[3pt]
$\hbox{PL}$ & coverage & 0.90 & 0.92 & 0.92  & 0.95 & 0.94 & 0.94 & 0.94 \\
&length & 0.41 & 0.33 & 0.26 & 0.20 &0.16 & 0.14 &0.12  \\[3pt]
$\hbox{$m$-out-of-$n$}$ & coverage & 0.97 & 0.97 & 0.96 & 0.96 & 0.95 & 0.96 & 0.97\\
& length & 0.54 & 0.47 & 0.31 & 0.24 & 0.17 & 0.16 & 0.14\\
\end{tabular}}
      \label{table_compare}
\end{center}
$\vspace{-0.6in}$
\end{table}

\bigskip
\section{Real data analysis}
\cite{finkelstein1985semiparametric} considered an interval censored data set of a study of early breast cancer patients. Between 1976 and 1980,   94 patients had been treated at the Joint Center for Radiation Therapy in Boston. They were assigned into two groups: one group treated with primary radiation therapy and adjuvant chemotherapy (48 patients) and the other group with radiotherapy alone (46 patients).  
Times of cosmetic deterioration,  determined by the appearance of breast retraction, were compared between the two treatment groups to determine whether chemotherapy has an impact on the rate of deterioration of the cosmetic state. 
The patients were checked at clinic every 4 to 6 month. For each patient, the only information available is a time interval when the retraction was present. See Section 5.3 in \cite{finkelstein1985semiparametric} and their Table 4 for more details, where all patients' time intervals are provided. 

We are interested in estimating the distribution of the retraction time ($F$). We model the data set as case 2 censoring and estimate the distribution functions separately for the two treatment groups, i.e., radiotherapy group ($T=0$) and radiotherapy and chemotherapy group ($T=1$). 
Figure \ref{datafit}(a) shows the NPMLE of $F$  computed for the two treatment groups. The distribution function of the group with $T=1$ dominates that of $T=0$ in general, which indicates that patients receiving radiotherapy and chemotherapy have an earlier deterioration time, as measured by the appearance of breast retraction. 
From Figure~\ref{datafit}(b) we see that the MSE curves from different $h_0$ are consistent with each other,  and their values decrease as $h$ increases from 1 to about 10 and then stay quite close. Based on these observations, we choose our bandwidth $h=10$. 

\begin{figure}[h]
\subfigure[NPMLE of $F$]{\includegraphics[width=5cm,height=5.15cm]{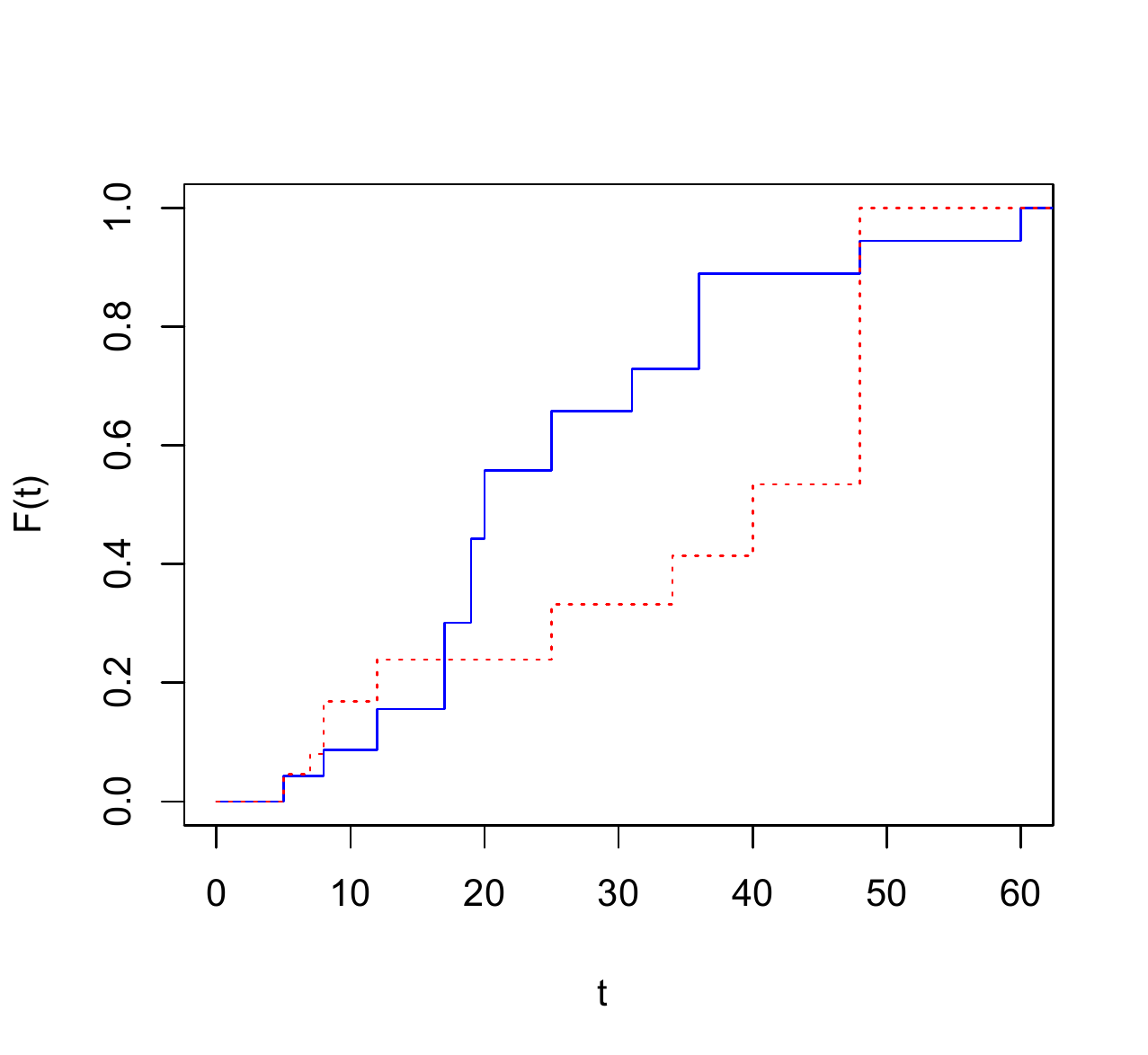}}
\subfigure[MSE curves with different initial bandwidths]{\includegraphics[width=10cm,height=5.2cm]{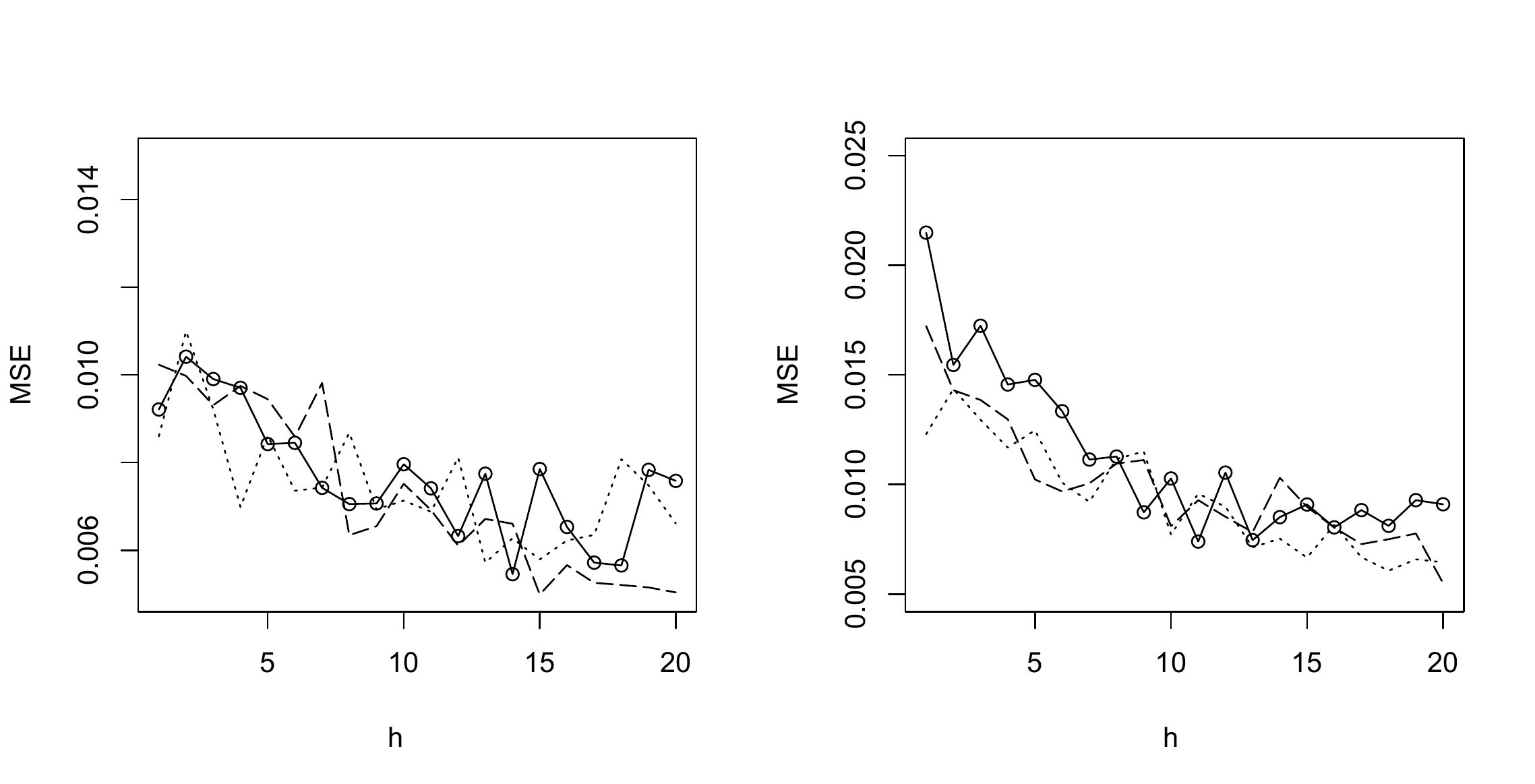}}
\caption{(a)  NPMLE of $F$ for groups $T=0$ (dashed) and $T=1$ (solid);
(b) estimated MSE curves at $t=30$  from the SMLE bootstrap with initial bandwidths $h_0=5, 10, 15$. The left plot is for $T=0$  and the right for $T=1$. }
\label{datafit}
\end{figure}
\begin{table}[h!]
\caption{\it CIs of  the distribution of  retraction time in  groups $T=0$ and $T=1$ at $t_0=20$ and 30.}{\small
\begin{tabular}{llcccccc}
 &&$\tilde F_n(20)$ & 90\% CI  & 95\% CI  &$\tilde F_n(30)$ & 90\% CI  & 95\% CI  \\[3pt]
{$T=1$}
&{\small SMLE} & 0.56 & [0.35, 0.79]  & [0.31, 0.82] & 0.66 & [0.48, 0.91] & [0.45, 0.94] \\
&{\small NPMLE }& 0.56 &  [0.41, 0.88]  &  [0.39, 0.92] & 0.66 & [0.44, 0.83] & [0.41, 0.86] \\[3pt]
{$T=0$}&{\small SMLE} & 0.24 & [0.11, 0.38]   & [0.08, 0.39] & 0.33  & [0.17, 0.50] & [0.14, 0.53] \\
&{\small NPMLE } & 0.24 &  [0.12, 0.35] & [0.09, 0.38] & 0.33 & [0.16, 0.49] & [0.13, 0.51]   
\end{tabular}}
\label{datacof}
\end{table}
Table~\ref{datacof} presents $90\%$ and $95\%$ confidence intervals of $F$ at $t_0=20$ and $30$ obtained using the NPMLE and SMLE bootstrap methods. The left extremities of the confidence intervals for the distribution function in group $T=1$  are  shifted to the right of those for the corresponding time points in group $T=0$, which indicates the treatment effect. We also see that the confidence intervals from SMLE and NPMLE bootstrap methods are quite different, which may be due to the inconsistency of the NPMLE method.

\section{Appendix: Proof of Theorems}
In this paper stochastic processes are regarded as random elements in $D(\mathbb{R})$, the space of right continuous functions on $\mathbb{R}$ with left limits, equipped with the projection $\sigma$-field and the topology of uniform convergence on compacta; see \cite{Pollard84}, Chapters IV and V for background.

For random elements $(V_n)_{n=1}^\infty$ and $V$ taking values in a metric space $(\mathfrak{X},\rho)$ we say that $V_n$ converges conditionally (given the data) in probability to $V$, almost surely (in probability), if for any given $\epsilon>0$, $ {P}(\rho(V_n,V)>\epsilon \mid {\mathbf Z}_n)\xrightarrow[]{} 0$ almost surely (in probability), where ${\mathbf Z}_n$ denotes our observed data.

\subsection{Proof of Theorem~\ref{thm2}}
We denote our bootstrap sample by $(T_1,\Delta^*_1),\ldots,(T_n,\Delta^*_n)$. 
Let $\mathbb{P}_n^*$ denote the induced measure of the bootstrap sample and write 
$$\mathbb{P}_n^* f(\Delta,T) = \frac{1}{n} \sum_{i=1}^n f(\Delta_i^*, T_i).$$
 Letting  $A = \{(x,t): x \le t\}$, we define the following stochastic processes:
\begin{eqnarray}
V_n^*(t) = \mathbb{P}_n^* \mathbf{1}_A \mathbf{1}_{\mathbb{R} \times [0,t]} 
= \frac{1}{n} \sum_{i=1}^n \Delta_i^* \mathbf{1}_{T_i \le t}, \quad
G_n^*(t) = \mathbb{P}_n^* \mathbf{1}_{\mathbb{R} \times [0,t]} 
= \frac{1}{n} \sum_{i=1}^n \mathbf{1}_{T_i \le t}. \nonumber
\end{eqnarray}
Let $\mathbb P_{T,n}$ be the empirical probability measure of  $\{T_i\}_{i=1}^n$.
Let $P_n$ be the probability measure induced by  $F_n$ and $\mathbb P_{T,n}$.
Note that under the conditional bootstrap procedure, $\mathbb P_n^* f(T) = P_n f(T)= \mathbb P_{T,n} f(T)=\sum_{i=1}^n f(T_i)/n.$ We use ${ E_n}$ to denote the expectation with respect to $ P_n$.

Appealing to the characterization of $\tilde F_n^*$ \citep[see pp.~298-299 of][]{VW00}, we know that 
\begin{equation}\label{vw00}
 \tilde F_n^*(t) \le a \,\,\mbox{  iff  } \,\, \arg \min_s \{V_n^*(s) - a G_n^*(s)\} \ge T_{(t)}
\end{equation}
where $T_{(t)}$ is the largest observation time that does not exceed $t$. By~\eqref{vw00}, the event that $n^{1/3} \{\tilde F_n^*(t_0) -  F_n(t_0)\} \le x$ is equivalent to $$ \arg \min_s \left\{ V_n^*(s) - [x n^{-1/3}+ F_n(t_0)] G_n^*(s) \right\} \ge T_{(t_0)}.$$ This is the same as $$ n^{1/3} \left[ \arg \min_s \left\{ V_n^*(s) - [x n^{-1/3} +  F_n(t_0)] G_n^*(s) \right\} - t_0 \right] \ge n^{1/3} (T_{(t_0)} - t_0).$$ Changing $s \mapsto t_0+tn^{-1/3}$ and using the fact that $n^{1/3} (T_{(t_0)} - t_0) = o(1)$, the above inequality can be re-expressed as  $$\arg \min_t \left[ V_n^*(t_0 + t n^{-1/3}) - \{x n^{-1/3} +  F_n(t_0)\} G_n^*(t_0 + t n^{-1/3}) \right] \ge o(1).$$
The left hand side of the above inequality can be written as
 \begin{eqnarray}
& & \arg \min_t \left[ \mathbb{P}_n^* \mathbf{1}_A \mathbf{1}_{\mathbb{R} \times [0,t _0+ t n^{-1/3}]} -  F_n(t_0) G_n^*(t_0 + t n^{-1/3}) - x n^{-1/3} G_n^*(t_0 + t n^{-1/3})   \right] \nonumber \\
& = & \arg \min_t \Big[ n^{2/3} \mathbb{P}_n^*\{\mathbf{1}_A -  F_n(t_0)\}(\mathbf{1}_{\mathbb{R} \times [0,t_0+tn^{-1/3}]} - \mathbf{1}_{\mathbb{R} \times [0,t_0]} )  \nonumber \\
& & \;\;\;\;\;\;\;\;\;\;\;\;\;\;\;\;\;\;  - x n^{1/3} [G_n^*(t_0 + t n^{-1/3}) - G_n^*(t_0)]   \Big]. \nonumber\\
&= & \arg \min_t \Big[ n^{2/3} \mathbb{P}_n^*(\mathbf{1}_A -  F_n(T))(\mathbf{1}_{\mathbb{R} \times [0,t_0+tn^{-1/3}]} - \mathbf{1}_{\mathbb{R} \times [0,t_0]} ) \nonumber \\
& &  \;\;\;\;\;\;\;\;\;\;\;\;\;\;\;\;\;\; +n^{2/3} \mathbb{P}_n^*( F_n(T) -  F_n(t_0))(\mathbf{1}_{\mathbb{R} \times [0,t_0+tn^{-1/3}]} - \mathbf{1}_{\mathbb{R} \times [0,t_0]} )\nonumber\\ 
 && \;\;\;\;\;\;\;\;\;\;\;\;\;\;\;\;\;\; -x n^{1/3} [G_n^*(t _0+ t n^{-1/3}) - G_n^*(t_0) ]\Big]  .\label{eqdec}
\end{eqnarray}
To study the distribution of $\gamma_n^*$, we start with the distributions  of the three terms in (\ref{eqdec}).
This is given in the following Lemma. 
\begin{lemma}\label{thm1}
We have the following convergence results:
\begin{itemize}
\item[\it{(i)}]We have that 
\begin{eqnarray}\label{eq:limIII}
    x n^{1/3} \{G_n^*(t_0 + t n^{-1/3}) - G_n^*(t_0)\} \rightarrow x g(t_0) t, 
        \end{eqnarray}
 uniformly on compacta, almost surely.

    \item[(ii)]  Let $\mathbb{Z}$ be a standard two-sided Brownian motion on $\mathbb{R}$ such that $\mathbb{Z}(0)= 0$. If \eqref{eq:F_n-F} holds, then, conditional on the data, the process
    \begin{eqnarray}\label{eq:limI}
    n^{2/3} \mathbb{P}_n^* \left\{ (\mathbf{1}_A -  F_n(T))(\mathbf{1}_{\mathbb{R} \times [0,t_0+tn^{-1/3}]} - \mathbf{1}_{\mathbb{R} \times [0,t_0]} ) \right\} \stackrel{d}{\rightarrow}
     \sqrt{F(t_0) [1 -F(t_0)] g(t_0)} \mathbb{Z}(t) \notag\\
    \end{eqnarray}
    almost surely in the space $D(\mathbb{R})$.

    \item[(iii)] If we have the following convergence uniformly on compacts in $t$
\begin{equation}\label{condf}
\lim_{n\rightarrow\infty} n^{1/3}|F_n(t_0+n^{-1/3}t)-F_n(t_0) - f(t_0)n^{-1/3}t|=0,
\end{equation} then conditionally 
    \begin{eqnarray}\label{eq:limII}
    n^{2/3} \mathbb{P}_n^* \left\{ ( F_n(T) -  F_n(t_0))(\mathbf{1}_{\mathbb{R} \times [0,t_0+tn^{-1/3}]} - \mathbf{1}_{\mathbb{R} \times [0,t_0]}) \right\} \stackrel{}\rightarrow \frac{1}{2} f(t_0) g(t_0) t^2
    \end{eqnarray}
uniformly on compacta, almost surely.
\end{itemize}
\end{lemma}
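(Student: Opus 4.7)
The three parts have distinct flavors and I would handle them in order, saving the main work for (ii). For \emph{part (i)}, observe that because the bootstrap conditions on $T_1,\ldots,T_n$, the process $G_n^*$ coincides with the empirical distribution $G_n$ of the observation times. I would then split
\[
G_n(t_0+tn^{-1/3}) - G_n(t_0) = \{(G_n-G)(t_0+tn^{-1/3}) - (G_n-G)(t_0)\} + \{G(t_0+tn^{-1/3}) - G(t_0)\}.
\]
Stute's local oscillation modulus for the empirical process at window $n^{-1/3}$ bounds the first brace by $O_{a.s.}(n^{-2/3}\sqrt{\log n}) = o(n^{-1/3})$ uniformly in $|t|\le K$, while differentiability of $G$ at $t_0$ makes the second brace equal $g(t_0)tn^{-1/3} + o(n^{-1/3})$. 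Multiplying by $xn^{1/3}$ delivers \eqref{eq:limIII}.

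For \emph{part (iii)}, which only involves the $T_i$'s and $F_n$, I would rewrite the expression as a Stieltjes integral against $G_n$ and change variable $s = t_0+un^{-1/3}$ to obtain (for $t>0$)
\[
n^{1/3}\int_0^t [F_n(t_0+un^{-1/3}) - F_n(t_0)]\,dH_n(u), \qquad H_n(u) := n^{1/3}\{G_n(t_0+un^{-1/3}) - G_n(t_0)\}.
\]
Hypothesis \eqref{condf} linearizes the integrand as $f(t_0)un^{-1/3} + o(n^{-1/3})$ uniformly in $|u|\le K$, and since $H_n$ is bounded on compacta by (i), the $o$-remainder integrates to $o(1)$. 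Integration by parts together with the uniform limit $H_n(u)\to g(t_0)u$ then gives $f(t_0)\int_0^t u\,dH_n(u) \to f(t_0)\bigl(g(t_0)t^2 - \int_0^t g(t_0)u\,du\bigr) = \frac{1}{2}f(t_0)g(t_0)t^2$; the case $t<0$ is symmetric.

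The core step is \emph{part (ii)}. Set $W_n^*(t) = n^{-1/3}\sum_i(\Delta_i^* - F_n(T_i))(\mathbf{1}_{T_i\le t_0+tn^{-1/3}} - \mathbf{1}_{T_i\le t_0})$. Conditional on the data, the summands are independent, centered Bernoulli differences bounded by $1$, so the maximum individual contribution is $O(n^{-1/3})$ and the Lindeberg condition is trivial. The crucial step is computing the conditional covariance: for $0\le s\le t$,
\[
\mathrm{Cov}^*(W_n^*(s), W_n^*(t)) = n^{-2/3}\sum_i F_n(T_i)(1-F_n(T_i))\mathbf{1}_{t_0<T_i\le t_0+sn^{-1/3}}.
\]
Replacing $F_n(T_i)(1-F_n(T_i))$ by $F(t_0)(1-F(t_0))$ on the vanishing interval $\{|T_i-t_0|\le Kn^{-1/3}\}$, valid by $\|F_n-F\|\to 0$ and continuity of $F$ at $t_0$, and then evaluating the empirical integral via (i), yields the Brownian covariance $F(t_0)(1-F(t_0))g(t_0)\min(s,t)$; analogous computations handle $s,t<0$ and opposite-sign cases, giving convergence of all finite-dimensional distributions. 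For tightness in $D(\mathbb{R})$ I would bound $E^*[(W_n^*(t)-W_n^*(s))^2]$ by a constant multiple of $|t-s|$ using the same variance calculation and invoke the standard tightness criterion for sums of independent centered bounded variables.

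The main obstacle is the ``almost sure, conditional'' nature of the functional CLT in (ii): both the covariance and the tightness computations must hold simultaneously for almost every data sequence. Controlling the null sets requires combining the almost-sure statements $\|F_n-F\|\to 0$ and the almost-sure local rate for $G_n$ from (i), and then invoking a conditional functional CLT (in the spirit of the bootstrap empirical-process results of Gin\'e and Zinn) to lift finite-dimensional conditional convergence plus the modulus bound to weak convergence in $D(\mathbb{R})$. Parts (i) and (iii) are comparatively routine once the oscillation modulus and the linearization \eqref{condf} are in hand.
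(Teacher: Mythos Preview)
Your argument is essentially the paper's, with small differences in packaging. In (i) the paper invokes the law of the iterated logarithm for $\mathbb{P}_{T,n}-P$ rather than Stute's oscillation modulus; in (iii) the paper first splits off the population measure, bounding $n^{2/3}(\mathbb{P}_{T,n}-P)\bigl[(F_n(T)-F_n(t_0))\mathbf{1}_{\cdots}\bigr]$ via integration by parts and $\sqrt{n}\|\mathbb{P}_{T,n}-P\|=O(1)$, and then evaluates the $P$-integral directly from \eqref{condf}. Your route of integrating against $dH_n$ and passing to the limit after integrating by parts is equally valid and arguably cleaner.

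One point in (ii) needs sharpening. The bound $E^*\bigl[(W_n^*(t)-W_n^*(s))^2\bigr]\le C|t-s|$ is \emph{not} by itself a tightness criterion in $D(\mathbb{R})$: the Kolmogorov--Chentsov type conditions require an exponent strictly larger than~$1$ on the right. The paper uses Billingsley's Theorem~15.6, bounding $E_n\bigl[|W_n^*(t_1)-W_n^*(t)|^2\,|W_n^*(t_2)-W_n^*(t)|^2\bigr]$ for $t_1<t<t_2$ by $(H(t_2)-H(t_1))^2$ with a continuous nondecreasing $H$. Your setup recovers this immediately once you observe that $W_n^*$ has \emph{conditionally independent increments} (disjoint $t$-intervals select disjoint sets of $T_i$'s, hence disjoint sets of $\Delta_i^*$'s), so the fourth mixed moment factors into the product of two second moments and your variance bound yields the needed $C|t_2-t_1|^2$. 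Make that independence-of-increments step explicit rather than appealing to an unnamed ``standard criterion for sums of independent centered bounded variables.''
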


\begin{proof}[Proof of Lemma \ref{thm1}] 
(i)  To show the first convergence result, observe that
\begin{eqnarray}\label{eq:SimpIII}
& & x n^{1/3} \{G_n^*(t_0 + t n^{-1/3}) - G_n^*(t_0)\} \nonumber \\
& = & x n^{1/3} (\mathbb{P}_{T,n} - P)(\mathbf{1}_{\mathbb{R} \times [0,t_0 + t n^{-1/3}]} - \mathbf{1}_{\mathbb{R} \times [0,t_0]}) + x n^{1/3} P(\mathbf{1}_{\mathbb{R} \times [0,t_0 + t n^{-1/3}]} - \mathbf{1}_{\mathbb{R} \times [0,t_0]}). \nonumber 
\end{eqnarray}
By the law of iterated logarithm, this equals
$$ o(1) + x n^{1/3} \{G(t_0+ t n^{-1/3}) - G(t_0)\}  \rightarrow  x g(t_0) t, ~~a.s.,
$$
uniformly on compacta. \newline

\noindent(ii)
To show (\ref{eq:limI}), let $Z_{n,i} (t)= n^{-1/3} (\Delta_i^* - F_n(T_i)) W_{n,i}(t)$ where $W_{n,i}(t) = \mathbf{1}_{T_i \le t_0 + tn^{-1/3}} - \mathbf{1}_{T_i \le t_0}$.
The left-hand side of (\ref{eq:limI}) then can be expressed as $\sum_{i=1}^n Z_{n,i}(t)$. 
Note that $Z_{n,i}(t)$ has mean $0$ and variance 
$\sigma_{n,i}^2(t) = n^{-2/3} W_{n,i}^2(t) F_n(T_i) [1 - F_n(T_i)].$

 Therefore, for $h >0$, $s_n^2(t) := \sum_{i=1}^n \sigma_{n,i}^2(t)$ can be simplified as
$$ n^{1/3} \mathbb{P}_{T,n} [F_n(T) (1 - F_n(T)) (\mathbf{1}_{T\leq t_0 + t n^{-1/3}} 
- \mathbf{1}_{T\leq t_0})^2] .$$
The preceding display is equal to
\begin{eqnarray}
& & o(1) + n^{1/3} P [F_n(T) (1 - F_n(T))  (\mathbf{1}_{T\leq t_0 + tn^{-1/3}} 
- \mathbf{1}_{T\leq t_0})^2] \nonumber \\
& = & o(1) + n^{1/3} \int_{t_0}^{t_0 + t n^{-1/3}} F_n(u) (1 - F_n(u)) g(u) du \nonumber \\
& = & o(1) + \int_0^{t} F_n(t_0 + sn^{-1/3}) [1 - F_n(t_0 + sn^{-1/3})] g(t_0 + s n^{-1/3}) ds \nonumber \\
& \stackrel{a.s}{\rightarrow} & F(t_0) [1 - F(t_0)] g(t_0) t. \nonumber
\end{eqnarray}
By the Lindeberg-Feller CLT \cite[see pp. 359 of][]{B95} we have 
$$\sum_{i=1}^n Z_{n,i}(t) \stackrel{d}{\rightarrow} N(0,F(t_0) [1 - F(t_0)] g(t_0) t)$$
 for every $t \in \mathbb{R}$. Similarly we have the the convergence of the finite dimensional joint distribution. 
We only need to show the tightness of $\sum_{i=1}^n Z_{n,i}(t)$. 
By Theorem 15.6 in \cite{billingsleyconvergence}, it is sufficient to show that there exists a nondecreasing, continuous
function $H$ such that for any $t_1 < t < t_2$, $\gamma>0$ and $\alpha>1/2$
\begin{equation}\label{bll}
{ E_n} \Big[\Big|\sum_{i=1}^nZ_{n,i}(t_1)-\sum_{i=1}^nZ_{n,i}(t)\Big|^{\gamma}
\Big |\sum_{i=1}^nZ_{n,i}(t_2)-\sum_{i=1}^nZ_{n,i}(t)\Big| ^\gamma \Big] \leq  (H(t_2)-H(t_1))^{2\alpha}.
\end{equation}
Take $\gamma=2$ and $\alpha=1$.
Note that  the following inequality holds almost surely
\begin{eqnarray*}
&&{ E_n} \Big[\Big| \sum_{i}Z_{n,i}(t_1)-\sum_{i}Z_{n,i}(t)\Big|^2
\Big |\sum_{j}Z_{n,j}(t_2)-\sum_{j}Z_{n,j}(t)\Big|^2\Big] \\
&=&  n^{-4/3}
E_n\Big[\sum_{i} (\Delta_i^*-F_n(T_i))^2
\mathbf{1}_{t_0 + t_1n^{-1/3}<T_i \le t_0 + tn^{-1/3}} \\
&&\quad\quad\quad\quad \times\sum_{j} (\Delta_j^*-F_n(T_j))^2
 \mathbf{1}_{t _0+ tn^{-1/3}<T_j \le t_0 + t_2n^{-1/3}}\Big]\\
&\leq&  n^{-4/3}
\sum_{i}
\mathbf{1}_{t_0 + t_1n^{-1/3}<T_i \le t_0 + tn^{-1/3}} 
 \sum_{j}\mathbf{1}_{t _0+ tn^{-1/3}<T_j \le t_0 + t_2n^{-1/3}}\\
 &\rightarrow & g^2(t_0) (t_2-t_1)^2.
\end{eqnarray*}
Therefore we have the desired weak convergence result. 
\newline

\noindent(iii)
The left--hand side of (\ref{eq:limII}) can be  decomposed as
\begin{eqnarray}\label{eq:SimpII}
& & n^{2/3} (\mathbb{P}_n^* - P)\left[ (F_n(T) - F_n(t))(\mathbf{1}_{\mathbb{R} \times [0,t_0+tn^{-1/3}]} - \mathbf{1}_{\mathbb{R} \times [0,t_0]}) \right] \nonumber \\
& + &  n^{2/3} P \left[ (F_n(T) - F_n(t))(\mathbf{1}_{\mathbb{R} \times [0,t_0+tn^{-1/3}]} - \mathbf{1}_{\mathbb{R} \times [0,t_0]}) \right].
\end{eqnarray}
The first term in (\ref{eq:SimpII}) can be shown to converge to $0$ uniformly for $t \in [-K,K]$, in probability, for any $K> 0$, as follows:
\begin{eqnarray}
& & n^{2/3} \left|(\mathbb{P}_n^* - P)\left[ (F_n(T) - F_n(t_0))(\mathbf{1}_{\mathbb{R} \times [0,t_0+tn^{-1/3}]} - \mathbf{1}_{\mathbb{R} \times [0,t_0]}) \right] \right| \nonumber \\
& = & n^{2/3} \Big| \int_{t_0}^{t_0 + t n^{-1/3}} (F_n(u) - F_n(t_0))\; d(\mathbb{P}_{T,n} - P)(u) \Big| \nonumber \\
& = & n^{2/3} \Big| \Big[ (\mathbb{P}_{T,n} - P)(u) (F_n(u) - F_n(t_0))\Big]_{t_0}^{t_0 + t n^{-1/3}} -  \int_{t_0}^{t_0 +t n^{-1/3}}(\mathbb{P}_{T,n} - P)(u) \; dF_n(u) \Big| \nonumber \\
& \le & (\sqrt{n} \|\mathbb{P}_{T,n} - P \|) \; 2 n^{1/6}\left[ F_n(t_0 + K n^{-1/3}) - F_n(t_0-Kn^{-1/3}) \right] = o(1), ~~a.s.\nonumber
\end{eqnarray}
Therefore first term in (\ref{eq:SimpII}) is ignorable given 
that 
\eqref{condf} holds. The second term in (\ref{eq:SimpII}) can be simplified as:
\begin{eqnarray}
& &  n^{2/3} P \left[ (F_n(T) - F_n(t_0))(\mathbf{1}_{\mathbb{R} \times [0,t_0+tn^{-1/3}]} - \mathbf{1}_{\mathbb{R} \times [0,t_0]}) \right] \nonumber \\
& = & n^{1/3} \int_0^t [ F_n(t_0 + s n^{-1/3}) -  F(t_0)] g(t_0 + s n^{-1/3}) ds \nonumber \\
& = & (1+o(1)) n^{1/3} \int_0^t s n^{-1/3}  f(t_0) g(t_0 + s n^{-1/3}) ds \nonumber \\
& \stackrel{}{\rightarrow} & f(t_0) g(t_0) \frac{1}{2} t^2 ,\nonumber
\end{eqnarray}
where the last step follows from the assumption that $g$ is continuous at $t_0$.  This gives the desired conclusion.
\end{proof}

\medskip
We proceed to prove Theorem \ref{thm2}.\\

\begin{proof}[Proof of Theorem \ref{thm2}] This follows from a similar argument as in Section 3.2.15 in~\cite{VW00}.
We only need to show the uniform tightness of the minimum, i.e., for any $\epsilon$ and $B_0>0$,  there exists a constant $B$ such that 
$${ P}_n\left(\max_{x\in[-B_0,B_0]}n^{1/3} \left| \arg \min_s \left\{ V_n^*(s) - [x n^{-1/3} +  F_n(t_0)] G_n^*(s) \right\} - t_0 \right| >B\right)<\epsilon,~~a.s.$$
Here recall that $P_n$ is the probability measure induced by $F_n$ and  the empirical probability measure of  $\{T_i\}_{i=1}^n$.
The above tightness result follows from Theorem 3.4.1 in \cite{VW00}. In particular, following their notation, we take 
\begin{eqnarray*}
\mathbb{M}^*_n(h) &=&
 \mathbb{P}_n^*(\mathbf{1}_A -  F_n(T))(\mathbf{1}_{\mathbb{R} \times [0,t_0+h]} - \mathbf{1}_{\mathbb{R} \times [0,t_0]} ) \nonumber \\
& & + \; \mathbb{P}_n^*( F_n(T) -  F_n(t_0))(\mathbf{1}_{\mathbb{R} \times [0,t_0+h]} - \mathbf{1}_{\mathbb{R} \times [0,t_0]} )\nonumber\\ 
 &&- \; x n^{-1/3} [G_n^*(t_0 + h) - G_n^*(t_0) ],
 \end{eqnarray*}
 and $$\mathbb{M}_n(h)= P_n(\mathbf{1}_A -  F_n(T))(\mathbf{1}_{\mathbb{R} \times [0,t_0+h]} - \mathbf{1}_{\mathbb{R} \times [0,t_0]} ).$$
 Then the conditions of their Theorem 3.4.1 are satisfied with $\phi_n(\delta)=\sqrt{\delta}+x\delta n^{1/6}$. Together with the fact that $\mathbb{M}_n(h)$ converges to 0 in probability, which follows from  the results in Lemma \ref{thm1},  the tightness result holds and we have the weak convergence of $\gamma^*_n$. 
 
The above tightness result  and Lemma \ref{thm1} imply that conditional on the data $n^{1/3} \arg \min_s \{ V_n^*(s) - [x n^{-1/3} +  F_n(t_0)] G_n^*(s) \} - t_0 $ converges weakly to the process
$$T(x):=\hbox{argmin}_t\left\{  \sqrt{F(t_0) [1 -F(t_0)] g(t_0)} \mathbb{Z}(t) +\frac{1}{2} f(t_0) g(t_0) t^2- xg(t_0)t\right\}$$ almost surely.  
Therefore conditionally we have the following convergence  
\begin{eqnarray*}
P_n(n^{1/3} (\tilde F_n^{*}(t_0)-F_n(t_0))\leq x)
&\xrightarrow[]{a.s.}& P(T(x)\geq 0)\\
&=& P(T(x)-f(t_0)^{-1}x\geq -f(t_0)^{-1}x).
\end{eqnarray*}
By the stationary of process $T(x)-f(t_0)^{-1}x$ as shown in \cite{groeneboom1989brownian}, we have that $T(x)-f(t_0)^{-1}x$ and $T(0)$ have the same distribution function. Therefore, $P_n(n^{1/3} (\tilde F_n^{*}(t_0)-F_n(t_0))\leq x)$ converges almost surely to  
\begin{eqnarray*}
 && P(T(0)\geq -f(t_0)^{-1}x)\\
&= & P\Big(\hbox{argmin}_t\Big\{\sqrt{F(t_0) [1 -F(t_0)] g(t_0)} \mathbb{Z}(t) +\frac{1}{2} f(t_0) g(t_0) t^2 \Big\} \geq  -f(t_0)^{-1}x\Big)\\
&= & P\Big(\hbox{argmin}_t\Big\{ \mathbb{Z}\Big(\frac{f(t_0)^{2/3}g(t_0)^{1/3}}{[4F(t_0)(1-F(t_0)]^{1/3}}t\Big) +\Big(\frac{f(t_0)^{2/3}g(t_0)^{1/3}}{[4F(t_0)(1-F(t_0)]^{1/3}}t\Big)^2 \Big\} \geq -\kappa^{-1}x\Big).
\end{eqnarray*} 
Here recall that $\kappa = \{4 F(t_0)(1-F(t_0))f(t_0)/g(t_0)\}^{1/3}$ and the last equation is due to the Brownian scaling. 
Then, the above display is  equal to 
\begin{eqnarray*}
 P(\hbox{argmin}_t\{ \mathbb{Z}(t) +t^2 \} \geq -\kappa^{-1}x)
&=& P(\hbox{argmin}_t\{ \mathbb{Z}(t) +t^2 \} \leq \kappa^{-1}x),
\end{eqnarray*} 
which gives the desired conclusion.
\end{proof}

\subsection{Proofs of Theorems \ref{thm:InconsBootsNPMLE} and \ref{thm:ConsBoots}}
\begin{proof}[Proof of  Theorem \ref{thm:InconsBootsNPMLE}]
Proof of Theorem \ref{thm:InconsBootsNPMLE} follows a similar argument as in the proof of Theorem 3.1 in \cite{sen2010inconsistency} and we only show the key steps. The bootstrap NPMLE $\tilde F^*_n$ is the left derivative of the greatest convex minorant of the cumulative sum diagram consisting of  points $(G_n^*(t), V_n^*(t))$. 
Let $\mathbb{F}_n^*$ be the corresponding cumulative sum diagram function, i.e., for $u\in [G^*_n(T_{(i)}), G^*_n(T_{(i+1)}))$ function $\mathbb{F}^*_n(u)=V_n^*(T_{(i)})$, where 
 $T_{(1)}\leq T_{(2)} \leq \cdots\leq T_{(n)}$ are the order statistics of $T_1,\ldots,T_n$.
Then $\gamma^*_n$ equals the left derivative at $t=0$ of the greatest convex minorant of  process 
$$\mathbb{Z}^*_n(t):=n^{2/3}\{\mathbb{F}_n^*(G_n^*(t_0)+n^{-1/3}t))-\mathbb{F}_n^*(G_n^*(t_0))-\tilde F_n(t_0)n^{-1/3}t\}.$$
 We further write $\mathbb{Z}^*_n(t)$ as $\mathbb{Z}^*_{n,1}(t)+\mathbb{Z}^*_{n,2}(t)$, where
\begin{eqnarray*}
\mathbb{Z}^*_{n,1}(t) &:=&n^{2/3}\{(\mathbb{F}_n^*-\tilde{\mathbb{F}}_n)(G_n^*(t_0)+n^{-1/3}t))-(\mathbb{F}_n^*-\tilde {\mathbb{F}}_n)(G_n^*(t_0))\},\\
\mathbb{Z}^*_{n,2}(t)&:=&n^{2/3}\{\tilde{\mathbb{F}}_n(G_n^*(t_0)+n^{-1/3}t))-\tilde{\mathbb{F}}_n(G_n^*(t_0))-\tilde F_n(t_0)n^{-1/3}t\}.
\end{eqnarray*}
Here $\tilde{\mathbb{F}}_n$ is the greatest convex minorant of  the cumulative sum diagram function based on the observed data $(T_i, \Delta_i), i=1,\ldots, n$. These $\mathbb{Z}^*$ processes take analogous forms as $\mathbb{Z}_n^*(h)$, $\mathbb{Z}_{n,1}^*(h)$, $\mathbb{Z}_{n,2}^*(h)$ in Section 3.2 in \cite{sen2010inconsistency}. 
For $f$, let $L_{\mathbb R}f$ be its greatest convex minorant on $\mathbb R$. Following the proofs   of  Theorem 3.1 in  \cite{sen2010inconsistency} and Lemma \ref{thm1},
we have unconditionally 
\begin{eqnarray*}
\mathbb{Z}^*_{n,1}(t)  &\stackrel{d}{\rightarrow}&
 \mathbb U_1(t):=    \sqrt{F(t_0) [1 -F(t_0)] } \mathbb{Z}_1(t),\\
     \mathbb{Z}^*_{n,2}(t) &\stackrel{d}{\rightarrow}&
  \mathbb U_2(t):=   L_{\mathbb R}\mathbb{Z}^0_2(t)-L_{\mathbb R}\mathbb{Z}^0_2(0)
     -(L_{\mathbb R}\mathbb{Z}^0_2)'(0)t, 
\end{eqnarray*}
where 
$$\mathbb{Z}^0_2(t) = \sqrt{F(t_0) [1 -F(t_0)] } \mathbb{Z}_2(t)+ \frac{1}{2} f(t_0) g^{-1}(t_0) t^2,$$ 
and $ \mathbb{Z}_1(t)$ and  $ \mathbb{Z}_2(t)$  are two independent two-sided standard Brownian motions. Furthermore, we have that  the unconditional distribution of $\gamma^*_n$ converges to that of $ L_{\mathbb R} (\mathbb U_1 +\mathbb U_2)'(0)$, which is different from that of $\kappa\mathbb C$. This gives the inconsistency result. 
\end{proof}\\

\begin{proof}[Proof of Theorem \ref{thm:ConsBoots}]
We will apply Theorem~\ref{thm2}.  By Lemma 5.9 in \cite{GW92}, the NPMLE $\tilde F_n$ satisfies $\|\tilde F_n - F \| = O_p(n^{-1/3}\log n)$ under our assumption of $F$ and $G$. 
Since $h\rightarrow 0$ and $n^{1/3}(\log n)^{-1} ~h\rightarrow\infty$, we have the following holds uniformly in $t$:
 \begin{eqnarray*}
|\check F_{n,h}(t)-F(t)|&\leq & \Big|\int  \bar K_{h}(t-s)  d (\tilde F_n(s)-F(s)) \Big|+\Big| \int  \bar K_{h}(t-s) dF(s)-F(t)\Big|\\
&=& \Big| \int (\tilde F_n(s)-F(s)) d K_h(t-s) \Big|+ \Big| \int  \bar K_{h}(t-s) dF(s)-F(t)\Big|+o_p(1)\\ 
&= & O(1)\|\tilde F_n - F \| h^{-1} + o_p(1) =o_p(1).
\end{eqnarray*}
Thus \eqref{eq:F_n-F} holds in probability.

Next we show that (\ref{condf}) holds in probability for SMLE $\check F_{n,h}$, i.e., 
$$\lim_{n\rightarrow\infty} n^{1/3}|\check F_{n,h}(t_0+n^{-1/3}t)-\check F_{n,h}(t_0) - f(t_0)n^{-1/3}t|=0.$$
Note that we have 
\begin{eqnarray*}
 n^{1/3}(\check F_{n,h}(t_0+n^{-1/3}t)-\check F_{n,h}(t_0)) 
&= & n^{1/3} \int \left \{ \bar K_{h}(t_0+n^{-1/3}t-s) - \bar K_{h}(t_0-s)  \right\} d \tilde F_n(s) \\  
&= & o_p(1)+  t\int  K_{h}(t_0-s) \, d \tilde F_n(s). 
\end{eqnarray*}
Integrating by parts yields
\begin{eqnarray*}
\int  K_{h}(t_0-s) \, d \tilde F_n(s)& =& 
\int  K_{h}(t_0-s) \, d (\tilde F_n(s)-F(s))+\int K_{h}(t_0-s) dF(s)\\
&=& - \int (\tilde F_n(s)-F(s))dK_{h}(t_0-s) +  \int K_{h}(t_0-s) dF(s)+o_p(1)\\
&=& f(t_0) +o_p(1),
\end{eqnarray*}
given that $\|\tilde F_n - F\| = O_p(n^{-1/3}\log n)$, $h\to 0$ and $n^{1/3}(\log n)^{-1} h\rightarrow\infty$.
\end{proof}

\subsection{Proof of Theorem \ref{Thmcase2th}}
Let $\mathbb{P}_n^*$ denote the induced measure of the bootstrap sample.
For any $t>0$, define \begin{eqnarray*}
W^*_{n}(t) &=& W_{n,1}^*(t) +\int_0^t \{ F_n(s)-F_n(t_0) \}\, d W_{n,2}^*(s),
\end{eqnarray*}
where for $k=1$ and $2$,
\begin{eqnarray*}
W^*_{n,k}(t) &=& \int_{t_1\in[0,t],x\leq t_1} \frac{d{\mathbb P}^*_n(x,t_1,t_2)}{\{F_n(t_1)\}^k}
+\int_{t_1\in[0,t], t_1<x\leq t_2} \frac{d {\mathbb P}_n^*(x,t_1,t_2)}{\{F_n(t_1)-F_n(t_2)\}^k}\\
&&+\int_{t_2\in[0,t], t_1<x\leq t_2} \frac{d {\mathbb P}_n^*(x,t_1,t_2)}{\{F_n(t_2)-F_n(t_1)\}^k}
+\int_{t_2\in[0,t], x> t_2} \frac{d {\mathbb P}_n^*(x,t_1,t_2)}{\{F_n(t_2)-1\}^k}.
\end{eqnarray*}

Thanks to the characterization of $\tilde F_n^{*,(1)}$ \citep[see, e.g.,][]{groeneboom1991nonparametric}, we know that
$$P_n\left[(n\log n)^{1/3} \{\tilde F_n^{*,(1)}(t_0)-F_n(t_0)\}> x\right] = P_n [T^{*,(0)}_n(F_n(t_0)+(n\log n)^{-1/3}x)<t_0],$$
where $P_n$ is the  probability measure induced by $F_n$ and the empirical probability measure of $\{T_{i,1},T_{i,2}\}_{i=1}^n$, and 
$$T^{*,(0)}_n(x):=\hbox{sargmin}_t [W^*_{n}(t)- \{x-F_n(t_0)\}W_{n,2}^*(t)].$$
For a function $w(t)$, $\hbox{sargmin}_t w(t)$ means the maximum value of the minimizers of function $w(t)$. If there is a unique minimizer, then $\hbox{sargmin}_t w(t)=\hbox{argmin}_t w(t)$.
By the definition of $T^{*,(0)}_n(x)$, we can write \begin{eqnarray*}
&&(n\log n)^{1/3}\Big(T^{*,(0)}_n(F_n(t_0)+(n\log n)^{1/3}x)-t_0\Big)\\
&=& \hbox{sargmin}_t\Big\{n^{2/3}(\log n)^{-1/3}(W^*_{n}(t_0+(n\log n)^{1/3}t)-W^*_{n}(t_0))\\
&&- \; x (n\log n)^{1/3}(W_{n,2}^*(t_0+ (n\log n)^{-1/3}t)-W_{n,2}^*(t_0))\Big\}\\
&=& \hbox{sargmin}_t\Big\{n^{2/3}(\log n)^{-1/3}(W^*_{n,1}(t_0+(n\log n)^{1/3}t)-W^*_{n,1}(t_0))\\
&& + \; n^{2/3}(\log n)^{-1/3}\int_{t_0}^{t_0+(n\log n)^{1/3}t} (F_n(s)-F_n(t_0)) d W_{n,2}^*(s)\\
&&- \; x n^{1/3}(\log n)^{-2/3} (W_{n,2}^*(t_0+ (n\log n)^{-1/3}t)-W_{n,2}^*(t_0))\Big\}.
\end{eqnarray*}

As in the proof of Theorem \ref{thm1}, we start with the distributions of three terms in the above display. This is given by the following lemma. 
\begin{lemma}\label{Thmcase2}
For  a sequence of distribution functions $ F_n$ that converge weakly to $F$,  if  the following convergence holds uniformly on compacts (in $t$) \begin{equation}\label{condc2}
\lim_{n\rightarrow\infty}(n\log n)^{1/3}|F_n(t_0+(n\log n)^{-1/3}t)-F_n(t_0) - f(t_0)(n\log n)^{-1/3}t|=0,
\end{equation} then, conditionally on the data, we have the the following convergence almost surely uniformly on compacta
\begin{eqnarray*} 
n^{1/3}(\log n)^{-2/3}(W_{n,2}^*(t_0+ (n\log n)^{-1/3}t)-W_{n,2}^*(t_0))
&\xrightarrow[]{p}&  \frac{2}{3f(t_0)}h(t_0,t_0)t,\\
n^{2/3}(\log n)^{-1/3}\int_{t_0}^{t_0+(n\log n)^{1/3}t} (F_n(s)-F_n(t_0)) d W_{n,2}^*(s)
&\xrightarrow[]{p}& \frac{1}{3}h(t_0,t_0)t^2,\\
n^{2/3}(\log n)^{-1/3}(W^*_{n,1}(t_0+(n\log n)^{1/3}t)-W^*_{n,1}(t_0))
&\xrightarrow[]{d}& \sqrt{\frac{2}{3}h(t_0,t_0)/f(t_0)} \mathbb{Z}(t).
\end{eqnarray*}
\end{lemma}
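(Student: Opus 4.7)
The plan is to establish each of the three convergences by decomposing the relevant increment into a bootstrap-centered piece plus its conditional expectation under $P_n$, exactly as done in Lemma~\ref{thm1} for the current status model. Writing $W^*_{n,k}(t) - W^*_{n,k}(t_0) = (\mathbb{P}_n^* - P_n)\phi^{(k)}_{n,t} + P_n\phi^{(k)}_{n,t}$ for the appropriate integrands $\phi^{(k)}_{n,t}$, the two pieces can then be studied separately.

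First I would compute $P_n\phi^{(2)}_{n,t}$ for the first statement. Since $W^*_{n,2}$ involves reciprocals like $[F_n(t_2)-F_n(t_1)]^{-2}$ multiplied by bootstrap multinomial indicators, taking $E^*$ collapses one power of the reciprocal, and the $P_n$-expectation reduces to an integral of $[F_n(t_2)-F_n(t_1)]^{-1}$ against $h(t_1,t_2)\,dt_1\,dt_2$ over a shrinking neighborhood of $(t_0,t_0)$. Using condition~\eqref{condc2} to replace $F_n(t_2)-F_n(t_1)$ by $f(t_0)(t_2-t_1)$ near the diagonal, the singularity $1/(t_2-t_1)$ integrated over a $(n\log n)^{-1/3}$-window produces precisely the $\log n$ factor responsible for the case~2 rate. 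Combined with the prefactor $n^{1/3}(\log n)^{-2/3}$, this yields the deterministic limit $\tfrac{2}{3f(t_0)}h(t_0,t_0)\,t$, mirroring the computation in Section~4 of \cite{groeneboom1991nonparametric}. The centered term is $o_p(1)$ by a Chebyshev bound on its conditional second moment, analogous to Lemma~\ref{thm1}(i).

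For the second statement, the same decomposition is applied to $\int_{t_0}^{t_0+(n\log n)^{-1/3}t}(F_n(s)-F_n(t_0))\,dW^*_{n,2}(s)$. The extra factor $F_n(s)-F_n(t_0)$, linearized via \eqref{condc2} to $f(t_0)(s-t_0)$, absorbs one power of $1/(t_2-t_1)$ inside the $W^*_{n,2}$ integrand, removing the $\log n$ and producing the finite limit $\tfrac{1}{3}h(t_0,t_0)t^2$ after elementary integration. For the third statement, the deterministic part is a smooth increment of order $(n\log n)^{-1/3}$, hence negligible after scaling by $n^{2/3}(\log n)^{-1/3}$, so the limit is driven by the centered piece. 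This piece is a sum of independent mean-zero bootstrap variables with variance computable from $E^*[(\Delta^*_{1,i})^2/F_n(T_{1,i})^2]=1/F_n(T_{1,i})$ and the analogous identities for the other three additive terms defining $W^*_{n,1}$. The same singular-integral computation as above gives the announced limiting variance $\tfrac{2}{3}h(t_0,t_0)/f(t_0)\cdot t$; a Lindeberg-Feller application at each $t$ and a fourth-moment bracketing argument parallel to~\eqref{bll} complete the weak convergence to $\sqrt{\tfrac{2}{3}h(t_0,t_0)/f(t_0)}\,\mathbb{Z}(t)$.

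The main obstacle is the precise evaluation of the singular integral $\int [F_n(t_2)-F_n(t_1)]^{-1} h(t_1,t_2)\,dt_1\,dt_2$ over a shrinking neighborhood of the diagonal $t_1=t_2=t_0$, which simultaneously produces the $\log n$ factor in the rate and the explicit constants appearing in all three limits. The crucial input is condition~\eqref{condc2}, which guarantees that one may replace $F_n$ by its linearization $F_n(t_0)+f(t_0)(\cdot -t_0)$ uniformly on the relevant window without introducing an error that survives the normalization. Once this diagonal integral is handled, the remaining bounds and the tightness argument are direct adaptations of those in Lemma~\ref{thm1} and in \cite{groeneboom1991nonparametric}.
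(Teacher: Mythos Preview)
Your decomposition into a centered bootstrap piece plus its $P_n$-expectation is the right scaffold, and you correctly flag the singular integral as the crux. What is missing is the mechanism that regularizes that integral and actually produces the $\log n$. As written, $\int_{t_1\in\text{window}}\int [F_n(t_2)-F_n(t_1)]^{-1}h(t_1,t_2)\,dt_2\,dt_1$ diverges at the diagonal $t_2=t_1$; no $\log n$ emerges without a lower cutoff on $t_2-t_1$. The paper supplies this cutoff as its very first step: it shows that, with conditional probability tending to one, no bootstrap draw has $T_{i,1}<X_i^*\le T_{i,2}$ with both $T_{i,1},T_{i,2}$ in $[t_0,t_0+B_0(n\log n)^{-1/3}]$, so every empirical sum may be restricted to $t_2-t_1>B_0(n\log n)^{-1/3}$. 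It is the logarithm of this cutoff, $\tfrac13\log(n\log n)$, that delivers both the extra $\log n$ in the rate and the constant $2/3$. Your Chebyshev bound on the centered piece in the first statement also breaks down without this restriction, since the conditional second moment involves $[F_n(t_2)-F_n(t_1)]^{-3}$, which is non-integrable near the diagonal. Note too that condition~\eqref{condc2} linearizes $F_n$ only at scale $(n\log n)^{-1/3}$ around $t_0$, whereas the dominant contribution to the $\log n$ comes from $t_2-t_1$ ranging over all scales between $(n\log n)^{-1/3}$ and $O(1)$; the replacement $F_n(t_2)-F_n(t_1)\approx f(t_0)(t_2-t_1)$ on that range needs weak convergence of $F_n$ together with smoothness of $F$, not~\eqref{condc2} alone.

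For the third convergence your Lindeberg--Feller plus fourth-moment tightness route is a valid alternative to what the paper does. The paper instead introduces an auxiliary process $\bar W_n^*$ (the sum restricted to $t_2-t_1>B_0(n\log n)^{-1/3}$), observes that it is a conditional martingale in $t$, applies a martingale functional CLT directly, and then bounds the remainder by a first-moment Markov inequality. Either technique works; the martingale route yields process-level convergence in one stroke, while yours parallels Lemma~\ref{thm1} more closely.
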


\begin{proof}[Proof of Lemma \ref{Thmcase2}] 
Proof of Lemma  \ref{Thmcase2} follows from a similar argument as in  the proof of Theorem 5.3 in \cite{groeneboom1991nonparametric}.
Note that  for  $B_0>0$, if (\ref{condc2}) holds, we have that 
\begin{eqnarray*}
&&P_n\Big(T_1<X<T_2, ~T_1, T_2\in [t_0, t_0+B_0(n\log n)^{-1/3}]\Big)\\
&=&(1+o(1))\int_{t_0\leq t_1<t_2\leq t_0+B_0(n\log n)^{-1/3}}f(t_0)(t_2-t_1)h(t_0,t_0)dt_2dt_1\\
&=&(1+o(1))\frac{1}{6}f(t_0)h(t_0,t_0)B_0^3(n\log n)^{-1}, ~~a.s.,
\end{eqnarray*}
where $h$ is the density function of observation times $T_1$ and $T_2$. This mean that the event $T_1<X<T_2$ and both $T_1$ and $T_2$ are in interval $ [t_0, t_0+B_0(n\log n)^{-1/3}]$ has probability going to 0 at rate $n\log n$. Then, for $n$ observations $\{T_{i,1}, T_{i,2}, X_i\}_{i=1}^n$, we have
\begin{eqnarray*}
&&P_n\Big(\exists i\in\{1,\cdots,n\} : T_{i,1}<X_i<T_{i,2},~ T_{i,1}, T_{i,2}\in [t_0, t_0+B_0(n\log n)^{-1/3}] \Big)\\
&\leq&n\times (1+o(1))\frac{1}{6}f(t_0)h(t_0,t_0)B_0^3(n\log n)^{-1}=o(1), ~~a.s.
\end{eqnarray*}
Therefore, conditionally on $T$'s, the following equation holds with probability 1, almost surely,
\begin{eqnarray*}
&&n^{1/3}(\log n)^{-2/3}(W_{n,2}^*(t_0+ (n\log n)^{-1/3}t)-W_{n,2}^*(t_0))\\
&=&n^{1/3}(\log n)^{-2/3}\\
&&\times \biggr[
\int_{t_1\in[t_0,t_0+ (n\log n)^{-1/3}t]} \frac{\mathbf{1}_{x\leq t_1}}{F_n(t_1)^{2}}
+ \frac{\mathbf{1}_{t_1<x\leq t_2,  ~
t_2 > t_1+B_0 (n\log n)^{-1/3}}} {(F_n(t_2)-F_n(t_1))^{2}}d {\mathbb P}_n^*(x,t_1,t_2)\\
&&+\int_{t_2\in [t_0,t_0+ (n\log n)^{-1/3}t]} \frac{\mathbf{1}_{ t_1<x\leq t_2, ~ 
t_2>t_1+B_0 (n\log n)^{-1/3}}}{(F_n(t_2)-F_n(t_1))^{2}}+\frac{\mathbf{1}_{ x> t_2} }
{(1-F_n(t_2))^{2}}d {\mathbb P}_n^*(x,t_1,t_2)\biggr].
\end{eqnarray*}
The above display is equal to 
\begin{eqnarray*}
&&(1+o(1))n^{1/3}(\log n)^{-2/3} 
\\
&&\times \biggr[
\int_{t_1\in[t_0,t_0+ (n\log n)^{-1/3}t]} \frac{\mathbf{1}_{x\leq t_1}}{F_n(t_1)}+ \frac{\mathbf{1}_{t_1<x\leq t_2,  ~
t_2 > t_1+B_0 (n\log n)^{-1/3}}} {F_n(t_2)-F_n(t_1)}d H(t_1,t_2)\\
&&+\int_{t_2\in [t_0,t_0+ (n\log n)^{-1/3}t]}  \frac{\mathbf{1}_{ t_1<x\leq t_2, ~ 
t_2>t_1+B_0 (n\log n)^{-1/3}}}{F_n(t_2)-F_n(t_1)}+\frac{\mathbf{1}_{ x> t_2} }
{1-F_n(t_2)}dH(t_1,t_2)\biggr]\\
&=& (1+o(1))\int_{t_1\in [t_0,t_0+ (n\log n)^{-1/3}t]} \frac{\mathbf{1}_{t_1<x\leq t_2, ~ 
t_2>t_1+B_0 (n\log n)^{-1/3}}}{f(t_0)(t_2-t_1)}h(t_0,t_0)dt_1dt_2\\
&&+ (1+o(1))\int_{t_2\in [t_0,t_0+ (n\log n)^{-1/3}t]}\frac{\mathbf{1}_{t_1<x\leq t_2, ~ 
t_2>t_1+B_0 (n\log n)^{-1/3}}}{f(t_0)(t_2-t_1)}h(t_0,t_0)dt_1dt_2
+o(1)\\
&=& (1+o(1))\frac{2h(t_0,t_0)t}{3f(t_0)}.
\end{eqnarray*}
In the above display, recall that $H$ is the distribution function of $T_1$ and $T_2$.
This gives the first convergence result.  

The second convergence follows from a similar argument.  Conditionally on $T$'s, the following equation holds with probability 1, almost surely,
\begin{eqnarray*}
&&n^{2/3}(\log n)^{-1/3}\int_{t_0}^{t_0+(n\log n)^{1/3}t} (F_n(s)-F_n(t_0)) d W_{n,2}^*(s)\\
&=&n^{2/3}(\log n)^{-1/3} \biggr[
\int_{t_1\in[t_0,t_0+ (n\log n)^{-1/3}t],x\leq t_1}\frac{F_n(t_1)-F_n(t_0)}{F_n(t_1)^{2}}d{\mathbb P}^*_n(x,t_1,t_2)\\
&&+\int_{\scriptsize\begin{array}{c}
t_1\in [t_0,t_0+ (n\log n)^{-1/3}t], t_1<x\leq t_2, \\ 
t_2>t_1+B_0 (n\log n)^{-1/3}
\end{array}} 
\frac{F_n(t_1)-F_n(t_0)}{(F_n(t_2)-F_n(t_1))^{2}}d {\mathbb P}_n^*(x,t_1,t_2)\\
&&+\int_{\scriptsize\begin{array}{c} 
t_2\in [t_0,t_0+ (n\log n)^{-1/3}t], t_1<x\leq t_2, \\ 
t_2>t_1+B_0 (n\log n)^{-1/3}
\end{array}}
\frac{F_n(t_2)-F_n(t_0)} {(F_n(t_2)-F_n(t_1))^{2}}d {\mathbb P}_n^*(x,t_1,t_2)\\
&&+\int_{t_2\in [t_0,t_0+ (n\log n)^{-1/3}t], x> t_2}\frac{F_n(t_2)-F_n(t_0)}{(1-F_n(t_2))^{2}}d {\mathbb P}_n^*(x,t_1,t_2)\biggr].
\end{eqnarray*}
The preceding display is equal to 
\begin{eqnarray*}
&& (1+o(1))\int_{\scriptsize\begin{array}{c}
t_1\in [t_0,t_0+ (n\log n)^{-1/3}t], t_1<x\leq t_2, \\
t_2>t_1+B_0 (n\log n)^{-1/3}\end{array}}
 \frac{t_1-t_0}{t_2-t_1}h(t_0,t_0)dt_1dt_2\\
&&+ (1+o(1))\int_{\scriptsize\begin{array}{c}
t_2\in [t_0,t_0+ (n\log n)^{-1/3}t], t_1<x\leq t_2, \\ 
t_2>t_1+B_0 (n\log n)^{-1/3}\end{array}}
\frac{t_1-t_0}{t_2-t_1}h(t_0,t_0)dt_1dt_2
+o(1)\\
&=& (1+o(1))\frac{1}{3}h(t_0,t_0)t^2. 
\end{eqnarray*}
 
The third convergence result  follows from a similar argument as in the proof of Lemma 5.5 in \cite{groeneboom1991nonparametric}. For $t\in [0,B_0]$, let 
\begin{eqnarray*}
\bar W^*_{n}(t)
&=& n^{2/3}(\log n)^{-1/3}\biggr[
\int_{\scriptsize\begin{array}{c}
t_1\in[t_0,t_0+(n\log n)^{1/3}t],x\leq t_1,\\
t_2>t_1+B_0 (n\log n)^{-1/3}\end{array}}
 \frac{1}{F_n(t_1)}d{\mathbb P}^*_n(x,t_1,t_2)\\
&&-\int_{\scriptsize\begin{array}{c}
t_1\in[t_0,t_0+(n\log n)^{1/3}t], t_1<x\leq t_2\\
 t_2>t_1+B_0 (n\log n)^{-1/3}\end{array}}
\frac{1}{F_n(t_2)-F_n(t_1)}d {\mathbb P}_n^*(x,t_1,t_2)\\
&&+\int_{\scriptsize\begin{array}{c}
t_2\in[t_0,t_0+(n\log n)^{1/3}t], t_1<x\leq t_2 \\
 t_2>t_1+B_0 (n\log n)^{-1/3}\end{array}}
 \frac{1}{F_n(t_2)-F_n(t_1)}d {\mathbb P}_n^*(x,t_1,t_2)\\
&& -\int_{\scriptsize\begin{array}{c}
t_2\in[t_0,t_0+(n\log n)^{1/3}t], x> t_2, \\
 t_2>t_1+B_0 (n\log n)^{-1/3}\end{array}}
 \frac{1}{1-F_n(t_2)}d {\mathbb P}_n^*(x,t_1,t_2)\biggr].
\end{eqnarray*}
We can see that conditional on $T$'s,  $\bar W^*_{n}(t)$ is a martingale and its variance is given by 
\begin{eqnarray*}
&& (1+o(1))n^{1/3}(\log n)^{-2/3} \biggr[
\int_{t_1\in[t_0,t_0+ (n\log n)^{-1/3}t],x\leq t_1} \frac{1}{F_n(t_1)}dH(t_1,t_2)\\
&&+\int_{\scriptsize\begin{array}{c}
t_1\in [t_0,t_0+ (n\log n)^{-1/3}t], t_1<x\leq t_2, \\ 
t_2>t_1+B_0 (n\log n)^{-1/3}\end{array}}
\frac{1}{F_n(t_2)-F_n(t_1)}d H(t_1,t_2)\\
&&+\int_{\scriptsize\begin{array}{c}
t_2\in [t_0,t_0+ (n\log n)^{-1/3}t], t_1<x\leq t_2, \\ 
t_2>t_1+B_0 (n\log n)^{-1/3}\end{array}}
\frac{1}{F_n(t_2)-F_n(t_1)}d H(t_1,t_2)\\
&&+\int_{t_2\in [t_0,t_0+ (n\log n)^{-1/3}t], x> t_2}
 \frac{1}{1-F_n(t_2)}d H(t_1,t_2)\biggr]\\
 &=& (1+o(1))\frac{2h(t_0,t_0)t}{3f(t_0)},~~a.s.
\end{eqnarray*}
Therefore, by the martingale central limit theorem, we have the conditional weak convergence of $\bar W^*_{n}(t)$ to $ \sqrt{\frac{2}{3}h(t_0,t_0)/f(t_0)} \mathbb{Z}(t)$. Next we show that the difference between $\bar W^*_n$ and $n^{2/3}(\log n)^{-1/3}(W^*_{n,1}(t_0+(n\log n)^{1/3}t)-W^*_{n,1}(t_0))$ is ignorable.
Let $$\hat W^*_n= n^{2/3}(\log n)^{-1/3}(W^*_{n,1}(t_0+(n\log n)^{1/3}t)-W^*_{n,1}(t_0))-\bar W^*_n.$$
By Markov's inequality and a similar argument as in the proof of the second convergence result, we obtain that 
$$P_n \left( \max_{t\in[0,B_0]} |\hat W^*_n(t)|>\epsilon \right) \leq O(1)\frac{\epsilon^{-1}n^{2/3}}{(\log n)^{1/3}}\int_{t_1,t_2\in[t_0,t_0+(n\log n)^{-1/3}B_0]}dH(t_1,t_2)=o(1)~ a.s.$$
Therefore, we have the third convergence result. 
\end{proof}

\medskip
\begin{proof}[Proof of Theorem \ref{Thmcase2th}] We need the following tightness result, whose proof follows from a similar argument as in the proof of Lemma 5.6 in \cite{groeneboom1991nonparametric} and is omitted from this paper.
\begin{lemma}\label{lemma1}
If (\ref{condc2}) holds, then for any $\epsilon>0$ and $B_0>0$,  there exists a constant B such that the following result holds almost surely
$$P_n\left(\max_{x\in[-B_0,B_0]} (n\log n)^{1/3}\left|T^{*,(0)}_n(F_n(t_0)+(n\log n)^{-1/3}x)-t_0\right|>B\right)<\epsilon.$$
\end{lemma}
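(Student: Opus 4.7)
The plan is to mimic the argmin-localization strategy used for Lemma 5.6 of \cite{groeneboom1991nonparametric}. Writing $h=(n\log n)^{1/3}(t-t_0)$ and $\hat h_n^*(x):=(n\log n)^{1/3}\{T^{*,(0)}_n(F_n(t_0)+(n\log n)^{-1/3}x)-t_0\}$, the derivation preceding Lemma~\ref{Thmcase2} gives $\hat h_n^*(x)=\mathrm{sargmin}_h\hat M_n^*(h,x)$ with
$$\hat M_n^*(h,x)=A_n(h)+B_n(h)-x\,C_n(h),$$
where $A_n,B_n,C_n$ are the three rescaled processes appearing in Lemma~\ref{Thmcase2}. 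Since $\hat M_n^*(0,x)=0$, the event $\{|\hat h_n^*(x)|>B\}$ forces the existence of some $|h|>B$ with $\hat M_n^*(h,x)\leq 0$, so it suffices to show that $\hat M_n^*(h,x)>0$ on $\{|h|>B\}\times[-B_0,B_0]$ with conditional probability at least $1-\epsilon$, almost surely in the data.

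I would then upgrade Lemma~\ref{Thmcase2} from pointwise/compact convergence to non-asymptotic uniform bounds on a large but fixed $|h|\leq B_1$. First, the conditionally centered martingale $A_n$ has bounded conditionally independent increments (since $F_n\to F$ and $f,h$ are bounded near $t_0$), so a Bernstein-type maximal inequality gives $\sup_{|h|\leq B_1}|A_n(h)|/\sqrt{|h|\vee 1}\leq\eta_1$ with $P_n$-probability at least $1-\epsilon/4$. Second, the explicit drift computation already carried out in the proof of Lemma~\ref{Thmcase2}, together with condition (\ref{condc2}) and a strong-LLN argument on $\mathbb{P}_{T,n}$, upgrades to a uniform quadratic lower bound $B_n(h)\geq\tfrac16 h(t_0,t_0)h^2-\eta_2$ on $|h|\leq B_1$. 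Third, the same computation yields the linear upper bound $|C_n(h)|\leq c_0|h|+\eta_3$ uniformly on $|h|\leq B_1$. Combining these three pieces, for $|h|\in[B,B_1]$ and $|x|\leq B_0$,
$$\hat M_n^*(h,x)\geq \tfrac16 h(t_0,t_0)h^2-\eta_1\sqrt{|h|\vee 1}-B_0c_0|h|-(\eta_2+B_0\eta_3),$$
which is strictly positive once $B$ is taken sufficiently large, depending only on $B_0$ and the a.s. constants $\eta_1,\eta_2,\eta_3$ (themselves depending only on $f(t_0),h(t_0,t_0),M_0$).

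The contribution from $|h|>B_1$ (only $|h|\leq(n\log n)^{1/3}M_0$ is possible since $t\in[0,M_0]$) is handled by a cruder Bernstein/Chernoff bound on the bootstrap martingale, in which the quadratic drift easily dominates. Uniformity in $x\in[-B_0,B_0]$ is obtained via the switch/monotonicity identity: the slope of $\hat M_n^*(h,x)$ in $x$ is $-C_n(h)$, and $C_n$ is approximately increasing in $h$, so $x\mapsto\hat h_n^*(x)$ is nondecreasing and it is enough to control the two endpoints $x=\pm B_0$. A standard dyadic peeling over shells $\{2^k\leq|h|<2^{k+1}\}$ then stitches the pointwise bounds into the required uniform-in-$h$ control. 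The main obstacle is promoting the three uniform bounds from \emph{in $P_n$-probability} to bounds holding with $P_n$-probability $\geq 1-\epsilon$ \emph{almost surely} over the data realization; this requires coupling Borel--Cantelli on the underlying sample space with the conditional Bernstein inequality for $A_n$ and a Glivenko--Cantelli-type strong law for the drift computations in $B_n$ and $C_n$, where condition (\ref{condc2}) and the boundedness of the densities $f$ and $h$ on ${\cal S}$ supply the required uniformity.
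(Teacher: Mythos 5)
The paper does not actually carry out a proof of this lemma; it simply defers to Lemma~5.6 of \cite{groeneboom1991nonparametric}, and your sketch --- decomposing the rescaled objective $\hat M_n^*(h,x)$ into a conditional martingale, a quadratic drift, and the $x$-scaled linear term, then showing the drift dominates on $\{|h|>B\}$ via a Bernstein-type maximal inequality and dyadic peeling, with uniformity in $x\in[-B_0,B_0]$ reduced to the endpoints by monotonicity of $C_n$ --- is exactly the argmin-localization argument that reference carries out. So your approach matches the one the paper intends.
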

The above lemma and Lemma \ref{Thmcase2} imply that  $(T^{*,(0)}_n(F_n(t_0)+(n\log n)^{-1/3}x)-t_0$ converges to the process 
$$T(x):=\hbox{sargmin}_t\left\{\sqrt{\frac{2h(t_0,t_0)}{3f(t_0)}}\mathbb{Z}(t)+\frac{1}{3}h(t_0,t_0)t^2-x\frac{2h(t_0,t_0)t}{3f(t_0)}\right\}~~a.s.$$
Since  $P_n((n\log n)^{1/3} (F_n^{*,(1)}(t_0)-F_n(t_0))\leq x) = P_n(T^{*,(0)}_n(F_n(t_0)+(n\log n)^{-1/3}x)\geq t_0),$ we have 
$$P_n\left((n\log n)^{1/3} (F_n^{*,(1)}(t_0)-F_n(t_0))\leq x\right) \xrightarrow[]{a.s.} P(T(x)-f^{-1}(t_0)x\geq -f^{-1}(t_0)x).$$
By the stationary of process $T(x)-f^{-1}(t_0)x$ as given in \cite{groeneboom1989brownian}, we have that $P_n((n\log n)^{1/3} (F_n^{*,(1)}(t_0)-F_n(t_0))\leq x)$ converges to  $P(T(0)\geq -f^{-1}(t_0)x)$.
Then by a Brownian scaling argument as in the proof of Theorem~\ref{thm2}, we obtain the desired conclusion. 
\end{proof}

\bibliographystyle{mystyle}
\bibliography{mybib}
\end{document}